\providecommand{\U}[1]{\protect\rule{.1in}{.1in}}
\newtheorem{theorem}{Theorem}
\newtheorem{claim}[theorem]{Claim}
\newtheorem{definition}[theorem]{Definition}
\newtheorem{lemma}[theorem]{Lemma}
\newenvironment{proof}[1][Proof]{\noindent\textbf{#1.} }{\ \rule{0.5em}{0.5em}}
\begin{document}

\title{\textbf{A manifold of pure Gibbs states of the Ising model on the Lobachevsky
plane}}
\author{Daniel Gandolfo, Jean Ruiz, and Senya Shlosman}

\maketitle

\begin{abstract}
In this paper we construct many `new' Gibbs states of the Ising model on the
Lobachevsky plane, the \textit{millefeuilles}.
Unlike the usual states on the integer lattices, our foliated states have infinitely many interfaces.
The interfaces are rigid and fill the Lobachevsky plane with positive density.

\textbf{Keywords:} geodesical family, Ising model, cross-ratio, interface, rigidity.

\end{abstract}

\section{Introduction}

There is a common belief in the field of statistical mechanics that the
qualitative properties of the systems living on Cayley trees $\mathcal{T}_{n}$
and on the (tesselations $\mathcal{L}_{p,q}$ of the) Lobachevsky plane
$\mathcal{L}$ should be the same. Indeed, the behavior of the ratio
$\frac{\left\vert \partial D_{r}\right\vert }{\left\vert D_{r}\right\vert },$
where $\left\vert D_{r}\right\vert $ is the volume of the ball $D_{r}$ of
radius $r,$ and $\left\vert \partial D_{r}\right\vert $ is the volume of the
sphere $\partial D_{r}$ of the same radius, is the same for the two families
of graphs, as $r\rightarrow\infty$. Yet the models on the Cayley trees are
studied in much more details, due to the fact that there one can use the
recurrent relations, which were used in many papers, like \cite{BRZ} or
\cite{E}. On the Lobachevsky plane no such relations exists, since the graphs
$\mathcal{L}_{p,q}$ have cycles. In fact, the Cayley trees $\mathcal{T}_{n}$
are `limits' of the tesselations $\mathcal{L}_{n+1,q}$ when $q\rightarrow
\infty,$ see Fig.1. Nevertheless, some results for the Cayley trees were also
obtained for the Lobachevsky plane as well -- see \cite{B}, for example. One
such result is about the non translation invariant states of the Ising model
on the Cayley trees and Lobachevsky plane. For the Cayley trees the analogs of
Dobrushin non
translation invariant states $\left\langle \cdot\right\rangle ^{\pm},$
\cite{D}, where an interface separates $\left(  +\right)  $-phase from the
$\left(  -\right)  $-phase, were constructed by Blekher and Ganikhodzhaev in
\cite{BG}. For the Lobachevsky plane these were obtained by Series and Sinai,
\cite{SS}.

\begin{figure}[h]
\centering
\includegraphics[width=14cm]{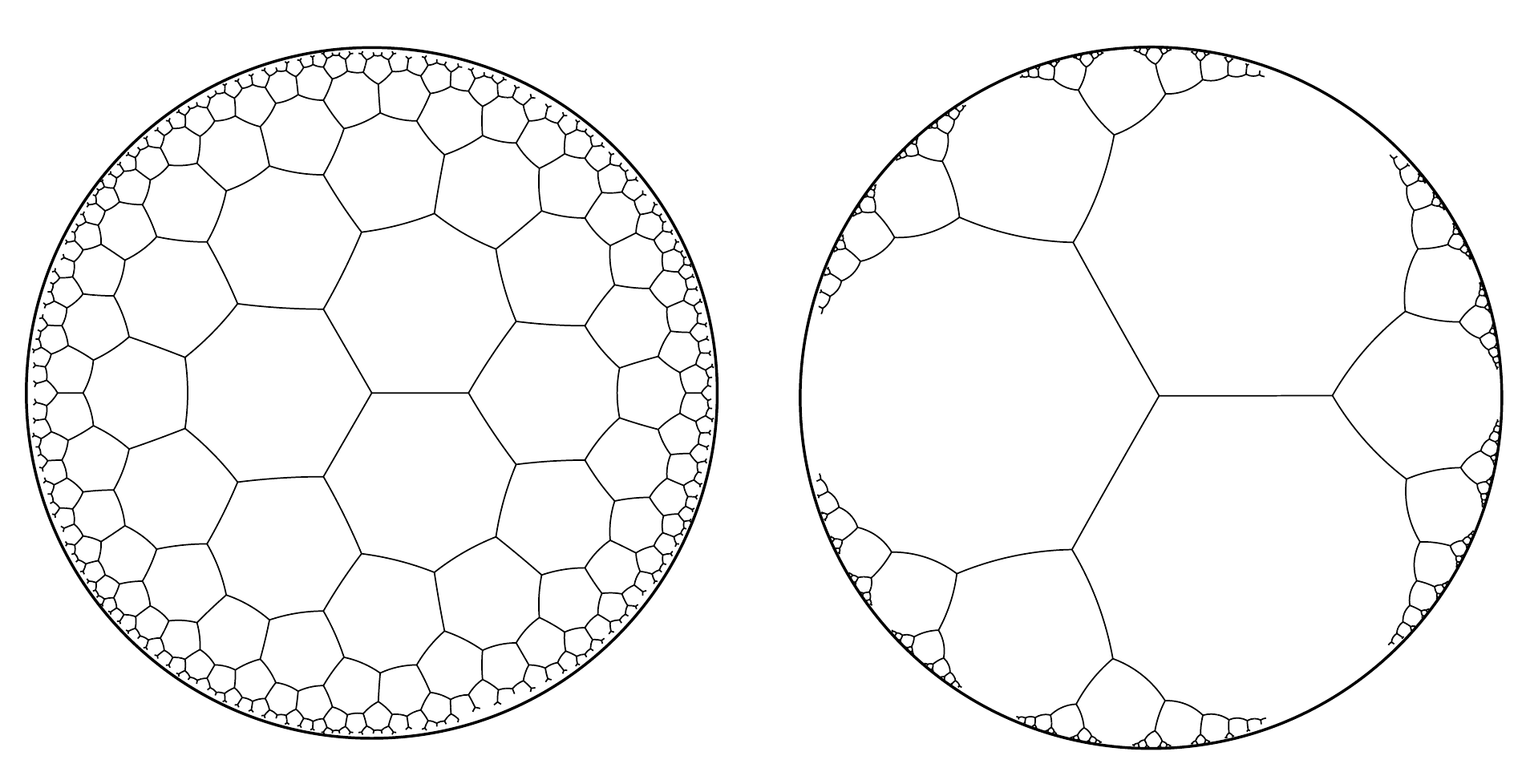}\caption{The tesselation
$\mathcal{L}_{3,7}$ and the Cayley tree $\mathcal{T}_{2}$.}
\label{7_3}
\end{figure}

Later it was discovered that there are much more non translation invariant
states on Cayley trees $\mathcal{T}_{n}$ than there are those for the case of
the lattice $\mathbb{Z}^{\nu}.$ Namely, in \cite{RR} the authors constructed
Gibbs states of the ferromagnetic Ising model on $\mathcal{T}_{n}$, $n\geq4,$
which they called `weakly periodic'. For such a state $\left\langle
\cdot\right\rangle $ the expectation $\left\langle \sigma_{t}\right\rangle $
of the spin $\sigma_{t}$ at $t\in\mathcal{T}_{n}$ is a `periodic' function on
$\mathcal{T}_{n},$ taking finitely many different values. Thus, it is
different both from the $\left(  +\right)  $-state $\left\langle
\cdot\right\rangle ^{+}$ and the $\left(  -\right)  $-state $\left\langle
\cdot\right\rangle ^{-}$, as well as from the Dobrushin $\left\langle
\cdot\right\rangle ^{\pm}$ states. Indeed, for every $t$ we have $\left\langle
\sigma_{t}\right\rangle ^{+}\equiv m^{\ast},$ $\left\langle \sigma
_{t}\right\rangle ^{-}\equiv-m^{\ast},$ while the function $\left\langle
\sigma_{t}\right\rangle ^{\pm}$ takes infinitely many values -- provided the
temperature $\beta^{-1}$ is low enough. Here $m^{\ast}=m^{\ast}\left(
\beta\right)  $ is the spontaneous magnetization.

Then in \cite{GRS} we have given a general construction of a grand family of
pure states on Cayley trees $\mathcal{T}_{n},$ again for $n\geq4.$ One can
think of these states as having infinitely many rigid $\pm$-interfaces, and
moreover the density of those interfaces is positive. (The reader should not
think that the trees $\mathcal{T}_{n}$ with $n\geq4$ have properties which the
lighter trees $\mathcal{T}_{2}$ and $\mathcal{T}_{3}$ do not enjoy. Indeed, in
the Section 4 of the present paper we explain how to modify the constructions
of \cite{GRS} in order to include all the Cayley trees $\mathcal{T}_{n}$,
$n\geq2.$ So the trees $\mathcal{T}_{n}$ with $n\geq4$ are not special in that respect.)

The purpose of the present paper is to construct similar family of states on
the tesselations $\mathcal{L}_{p,q}$ of the Lobachevsky plane $\mathcal{L}.$
Here $\mathcal{L}_{p,q}$ is the planar graph, such that each face has $p$
edges, every vertex is incident to $q$ edges, and $1/q+1/p<1/2$. Every such
graph can be isometrically embedded into $\mathcal{L}.$ The main result of the
present paper is the construction of Gibbs states of the Ising model on
$\mathcal{L}_{p,q},$ which have infinitely many Dobrushin $\pm$-interfaces. In
our case the interfaces are 1D random curves. They are rigid, as is the
interface of the Dobrushin state $\left\langle \cdot\right\rangle ^{\pm}$ of
the Ising model on $\mathbb{Z}^{3}.$ But unlike the 3D lattice, when there can
be at most one such interface, on the Lobachevsky plane $\mathcal{L}$ one can
have infinitely many of them. Moreover, they can have positive density. We
call these states the \textbf{Foliated States}, or just the millefeuilles.

In the next section we introduce special families of geodesics, which we call
geodesical families. We provide examples of such families and establish their
properties. The geodesical families define foliated\textbf{ }ground state
configurations. We prove in the Section 3 that these ground state
configurations are stable: the thermal fluctuations do not destroy them. In
other words, there are low-temperature Gibbs states, which are small
perturbations of the foliated\textbf{ }ground states. In the Section 4 we study
the foliated states on the Cayley trees.

\section{Geodesics and geodesical families}

Let $M$ be a locally compact Riemannian manifold. A (finite or countable)
family $\Gamma$ of smooth curves $\gamma_{i}\in M,$ either closed or coming
from and going to infinity, will be called a \textbf{geodesical family}, if no
bounded surgery can decrease its length. By this we mean the following: let
$\mathbf{s}=\left\{  s_{i}=\left[  \alpha_{2i-1},\alpha_{2i}\right]
\subset\gamma_{i},i=1,...,k\right\}  $ be a collection of segments of the
curves $\gamma_{i},$ $i=1,...,k.$ We want to remove the segments $s_{i}$ and
to interconnect the remaining family of curves, $\left\{  \gamma
_{i}\smallsetminus s_{i},i=1,...,k\right\}  $ in a different way. To do this,
let us consider a permutation $\pi$ of the set $\left\{  \alpha_{1}%
,...,\alpha_{2k}\right\}  ,$ and let $\mathbf{\bar{s}}=\left\{  \bar{s}%
_{i}=\left[  \pi\left(  \alpha_{2i-1}\right)  ,\pi\left(  \alpha_{2i}\right)
\right]  \right\}  $ be some collection of continuous curves, connecting the
points $\pi\left(  \alpha_{2i-1}\right)  $ and $\pi\left(  \alpha_{2i}\right)
,$ $i=1,...,k.$ The length increment
\[
\Delta\left(  \mathbf{s},\mathbf{\bar{s}}\right)  =\sum_{i=1}^{k}\left\vert
\bar{s}_{i}\right\vert -\sum_{i=1}^{k}\left\vert s_{i}\right\vert
\]
is well-defined. The family $\left\{  \gamma_{i}\right\}  $ is called
geodesical, if for any $k\geq1,$ any collection $\mathbf{s,}$ and permutation
$\pi$ and any collection $\mathbf{\bar{s}}\neq\mathbf{s}$ the increment
$\Delta\left(  \mathbf{s},\mathbf{\bar{s}}\right)  $ is strictly positive. In
particular, each curve $\gamma_{i}$ from a geodesical family has to be a geodesic.

(A reader with background from stat. mechanics recognizes immediately that our
definition is inspired by that of a ground state configuration. The
generalization to the case of minimal surfaces is immediate, but we will not
need it this paper.)

For example, if $\gamma_{1}$ and $\gamma_{2}$ are two geodesics, and
$\gamma_{1}\cap\gamma_{2}\neq\varnothing,$ then the pair $\gamma_{1}%
,\gamma_{2}$ is never a geodesical family. If $M$ is $\mathbb{R}^{2}$ or
$\mathbb{R}^{n},$ $n>2,$ then any geodesical family contains at most one
geodesic, which is a straight line.

The situation is different when $M$ is the Lobachevsky plane, $\mathcal{L}$.
Let $\gamma_{1},\gamma_{2}$ be two geodesics, and $x_{i}^{\prime}%
,x_{i}^{\prime\prime}$ be their end-points at the absolute, $i=1,2.$ The
cross-ratio of the two pairs of points $\left(  x_{1}^{\prime},x_{1}%
^{\prime\prime}\right)  $ and $\left(  x_{2}^{\prime},x_{2}^{\prime\prime
}\right)  $ is defined by%
\[
\bar{R}\left(  x_{1}^{\prime},x_{1}^{\prime\prime};x_{2}^{\prime}%
,x_{2}^{\prime\prime}\right)  =\frac{\left(  x_{2}^{\prime}-x_{1}^{\prime
}\right)  \left(  x_{2}^{\prime\prime}-x_{1}^{\prime\prime}\right)  }{\left(
x_{2}^{\prime}-x_{1}^{\prime\prime}\right)  \left(  x_{2}^{\prime\prime}%
-x_{1}^{\prime}\right)  }.
\]
For us it will be more convenient to use another version of it, which is given
by
\[
R\left(  x_{1}^{\prime},x_{1}^{\prime\prime};x_{2}^{\prime},x_{2}%
^{\prime\prime}\right)  =-\frac{\left(  x_{1}^{\prime\prime}-x_{1}^{\prime
}\right)  \left(  x_{2}^{\prime\prime}-x_{2}^{\prime}\right)  }{\left(
x_{1}^{\prime\prime}-x_{2}^{\prime}\right)  \left(  x_{2}^{\prime\prime}%
-x_{1}^{\prime}\right)  }
=\bar{R}-1.
\]
Then the pair $\gamma_{1},\gamma_{2}$ of non-crossing geodesics is a
geodesical family, iff the cross-ratio $R\left(  x_{1}^{\prime},x_{1}%
^{\prime\prime};x_{2}^{\prime},x_{2}^{\prime\prime}\right)  $ of the quadruple
$x_{1}^{\prime},x_{1}^{\prime\prime},x_{2}^{\prime},x_{2}^{\prime\prime}$ is
less than $1$.

In what follows we will use for the quantity $R\left(  x_{1}^{\prime}%
,x_{1}^{\prime\prime},x_{2}^{\prime},x_{2}^{\prime\prime}\right)  $ the
notation $R\left(  \gamma_{1},\gamma_{2}\right)  .$ For later use we need to
choose a scale length on $\mathcal{L}.$ We do it by imposing the condition
that for every pair $\gamma_{1},\gamma_{2}$ of non-intersecting geodesics with
the cross-ratio of the quadruple $x_{1}^{\prime},x_{1}^{\prime\prime}%
,x_{2}^{\prime},x_{2}^{\prime\prime}$ equal to $1,$ we have $\mathrm{dist}%
\left(  \gamma_{1},\gamma_{2}\right)  =1.$ When $R\left(  \gamma_{1}%
,\gamma_{2}\right)  \rightarrow0,$ we have that $\mathrm{dist}\left(
\gamma_{1},\gamma_{2}\right)  \rightarrow\infty.$

We are going to present a countable geodesical family $\Gamma,$ having\textit{
positive density. }That means that there exists a value $R>0,$ such that for
every point $x\in\mathcal{L}$ the disc $D_{x}\left(  R\right)  ,$ centered at
$x$ and having radius $R$ intersects some of the curves from $\Gamma.$ The
construction is the following:

\textbf{Construction 1. } Let $0\in\mathcal{L}$ be an arbitrary point, which
we will fix and will call an origin. Our construction will be inductive, and
will start `near' $0$ and will proceed away from $0$ to infinity.

\begin{figure}[!ht]
\centering
{\includegraphics[width=13cm]{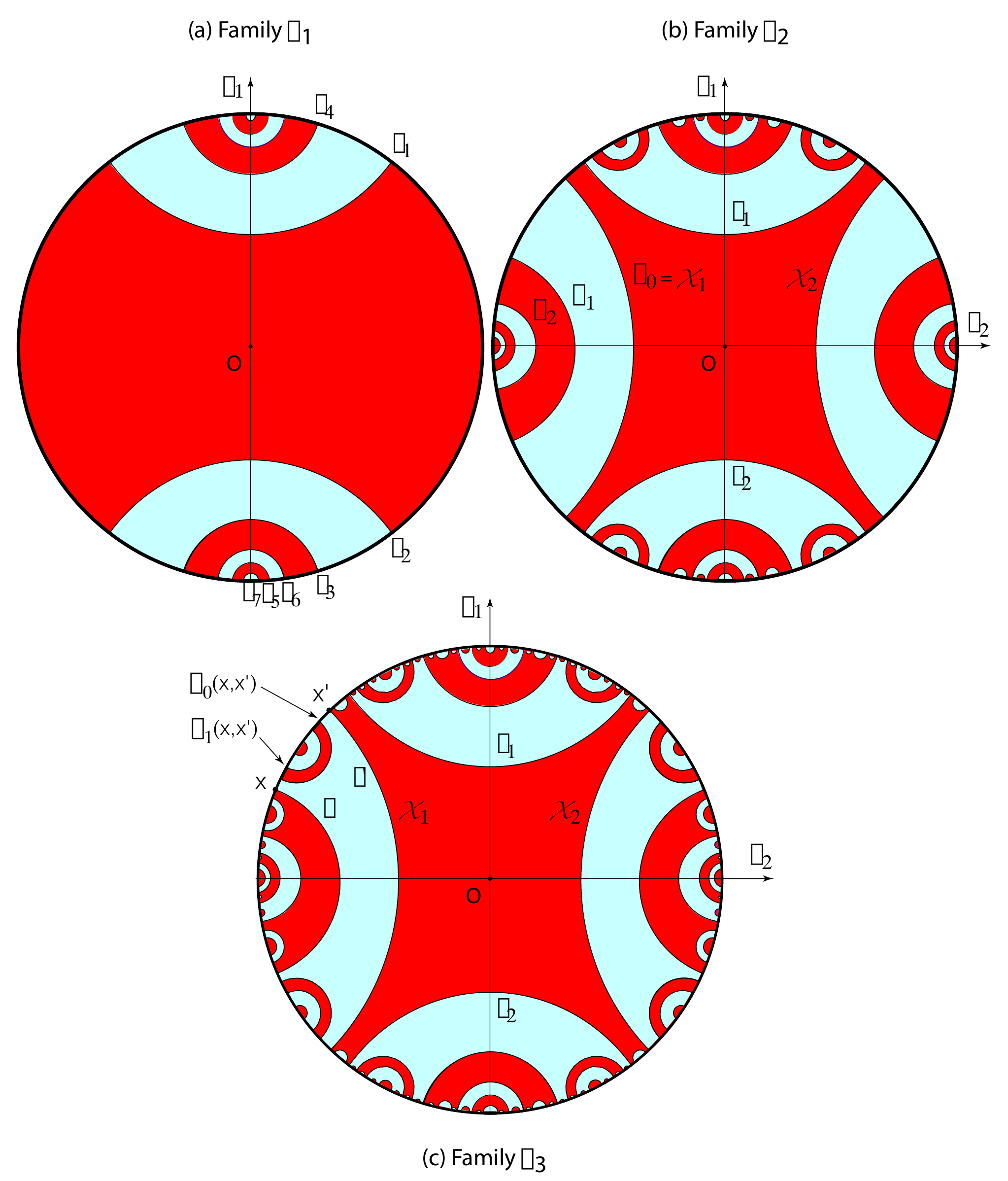}
\label{Constr1}}
\caption{Construction 1, the first three steps.}%
\end{figure}

Let us fix a small $\alpha<1.$ Let $\gamma_{1},\gamma_{2}$ be two geodesics,
with $\gamma_{1}$ being centrally symmetric to $\gamma_{2}$ with respect to
$0$. Additionally we suppose that for their
end-points $x_{1}^{\prime} ,x_{1}^{\prime\prime},x_{2}^{\prime},x_{2}
^{\prime\prime}$ at the absolute we have $R\left(  x_{1}^{\prime}
,x_{1}^{\prime\prime},x_{2}^{\prime} ,x_{2}^{\prime\prime}\right)  =\alpha.$
Let $\gamma_{3}$ be an axial reflection of $\gamma_{1}$ in $\gamma_{2},$
$\gamma_{4}$ -- a reflection of $\gamma_{2}$ in $\gamma_{1}.$ Let $\gamma
_{5},$ $\gamma_{6},$ $\gamma_{7}$ be reflections of $\gamma_{1},$ $\gamma
_{2},$ $\gamma_{4}$ in $\gamma_{3},$ and so on. Proceeding in this way, we
obtain an infinite sequence of geodesics, which we denote by $\Gamma_{1}$, see Fig.\ref{Constr1}(a).
Note that if the two geodesics $\gamma,\gamma^{\prime}\in\Gamma_{1}$ are
neighbors, then the cross-ratio $R\left(  \gamma,\gamma^{\prime}\right)
=\alpha.$ For every geodesic $\gamma_{i}$ in $\Gamma_{1}$ denote by $c\left(
\gamma_{i}\right)  $ the point on $\gamma_{i}$ which is the closest to $0.$
Clearly, there exists a geodesic $\delta_{1}$, passing through $0$ and
containing all the points $c\left(  \gamma_{i}\right)  .$ Let $\delta_{2}$ be
a geodesic passing through $0$ and orthogonal to $\delta_{1}.$ Clearly,
$\gamma_{2}$ is a reflection of $\gamma_{1}$ in $\delta_{2}.$

Let us now fill in the spaces between every two consecutive geodesics
$\gamma,\gamma^{\prime}\in\Gamma_{1}.$ Because of the above symmetries it is
sufficient to fill in the strip $S\left(  \gamma_{1},\gamma_{2}\right)  $
between $\gamma_{1}$ and $\gamma_{2}$; reflecting this `filling' in curves of
$\Gamma_{1}$ we get it for all neighboring curves in $\Gamma_{1}.$ We do it as
follows. Let $\varkappa_{1}$ and $\varkappa_{2}$ be two non-intersecting
geodesics in $S\left(  \gamma_{1},\gamma_{2}\right)  ,$ having the property:
$R\left(  \gamma_{i},\varkappa_{j}\right)  =\alpha.$ The existence and
uniqueness of this pair is straightforward. Note that $\varkappa_{2}$ is a
reflection of $\varkappa_{1}$ in $\delta_{1},$ and that the centers $c\left(
\varkappa_{j}\right)  $ belong to $\delta_{2}.$ Let us fill in the `inside' of
$\varkappa_{1}$ by the sequence of geodesics $\chi_{i},$ $i=1,2,...,$
$\chi_{0}\equiv\varkappa_{1},$ which are defined by the properties: $R\left(
\chi_{i},\chi_{i+1}\right)  =\alpha,$ $c\left(  \chi_{i}\right)  \in\delta
_{2},$ and $\mathrm{dist}\left(  c\left(  \chi_{i}\right)  ,0\right)
\nearrow\infty,$ $i=0,1,2,....$ We do the same for $\varkappa_{2}.$ Applying
the reflection symmetries we fill in all the strips between the neighboring
curves in $\Gamma_{1}.$ The obtained family, together with curves in
$\Gamma_{1},$ is denoted by $\Gamma_{2}$, see Fig.\ref{Constr1}(b).

Note that for every two neighboring curves $\gamma,\gamma^{\prime}\in
\Gamma_{2}$ we have $R\left(  \gamma,\gamma^{\prime}\right)  =\alpha.$ The
notion `neighboring' means that one can choose ends $x,x^{\prime}$ for
$\gamma,\gamma^{\prime}$ in such a way that between them there are no ends of
other members of $\Gamma_{2}.$ For each such pair of ends $x,x^{\prime}$ of
the curves $\gamma,\gamma^{\prime}$ we reproduce the above construction,
obtaining the sequence $\chi_{i}\left(  x,x^{\prime}\right)  ,$
$i=0,1,2,...\ $of geodesics with ends between $x$ and $x^{\prime}.$ In
particular, $R\left(  \chi_{i}\left(  x,x^{\prime}\right)  ,\chi_{i+1}\left(
x,x^{\prime}\right)  \right)  =\alpha=$ $R\left(  \chi_{0}\left(  x,x^{\prime
}\right)  ,\gamma\right)  =$ $R\left(  \chi_{0}\left(  x,x^{\prime}\right)
,\gamma^{\prime}\right)  .$ The resulting family is denoted by $\Gamma_{3}$,
see Fig.\ref{Constr1}(c).

The last step can be iterated, so we inductively can define the families
$\Gamma_{1}\subset\Gamma_{2}\subset...;$ the final result $\Gamma_{\infty
}=\cup_{k}\Gamma_{k}$ is the desired collection. $\blacksquare$

The Millefeuille Construction 1 above results in a `ground state' with `zero
mean magnetization'. We present now a generalization, which will have a
non-zero magnetization.

\bigskip

The \textbf{Construction 2} will be defined by the two parameters: $\alpha$ and $\eta.$
The construction of the family $\Gamma_{\infty}=\Gamma_{\infty}\left(
\alpha,\eta\right)  $ is inductive. The data after the completion of the
$k$-th step consists from

\begin{itemize}
\item the family of $k$ non-intersecting geodesics $\gamma_{i},$ $i=1,...,k,$

\item a choice of one arc among the $2k$ non-intersecting arcs on the
absolute, which are defined by the $2k$ end-points of the curves $\gamma_{i},$

\item the chess-board assignment of the $+$ or $-$ signs ($\equiv$phases) to
each of the $k+1$ regions of the plane $\mathcal{L}.$
\end{itemize}

Let $\left(  x,y\right)  $ be the selected arc, and $\left(  y,z\right)  $ be
the next one (clockwise). Then after the $k+1$-th step one extra geodesic
$\gamma_{k+1}$ is added, with end-points inside the arc $\left(  x,y\right)
,$ while the arc $\left(  y,z\right)  $ is declared as chosen. Now we have to
describe the rule of constructing the curve $\gamma_{k+1}$ given all the
previous curves $\gamma_{i},$ the new assignment of signs, plus we need to
specify the first step of induction.

The sign assignment is simple. Let $\mathcal{L}_{0},\mathcal{L}_{1}%
,...,\mathcal{L}_{k}$ are connected components of $\mathcal{L},$ which are
defined by the lines $\gamma_{1},...,\gamma_{k}.$ Each of them already has its
sign. The geodesics $\gamma_{k+1}$ belongs to one of this components, say
$\mathcal{L}_{j},$ and it splits $\mathcal{L}_{j}$ into two connected
components, one of which borders some of the previous curves $\gamma_{i}.$ Let
us call this component exterior to $\gamma_{k+1}$, while other one will be
called interior, and will be denoted by $\mathcal{L}_{k+1}.$ Let the `new'
component $\mathcal{L}_{j}\equiv\mathcal{L}_{j}^{"new"}=\mathcal{L}%
_{j}^{"old"}\smallsetminus\mathcal{L}_{k+1}.$ Then the `new' component
$\mathcal{L}_{j}$ retains the sign of the `old' one, while the component
$\mathcal{L}_{k+1}$ gets the opposite one. All other components retain their signs.

In the beginning we pick a point $0$ on the plane, which will be called the
origin. We take for $\gamma_{1}$ an arbitrary (maximal) geodesic, passing
through $0,$ and we choose arbitrarily one of the two arcs (semicircles) on
the absolute, which are defined by the endpoints of $\gamma_{1}.$ The
resulting two components $\mathcal{L}_{0}$ and $\mathcal{L}_{1}$ of the plane
$\mathcal{L}$ get opposite signs $+$ and $-.$

\begin{figure}[h]
\centering
\includegraphics[width=8cm]{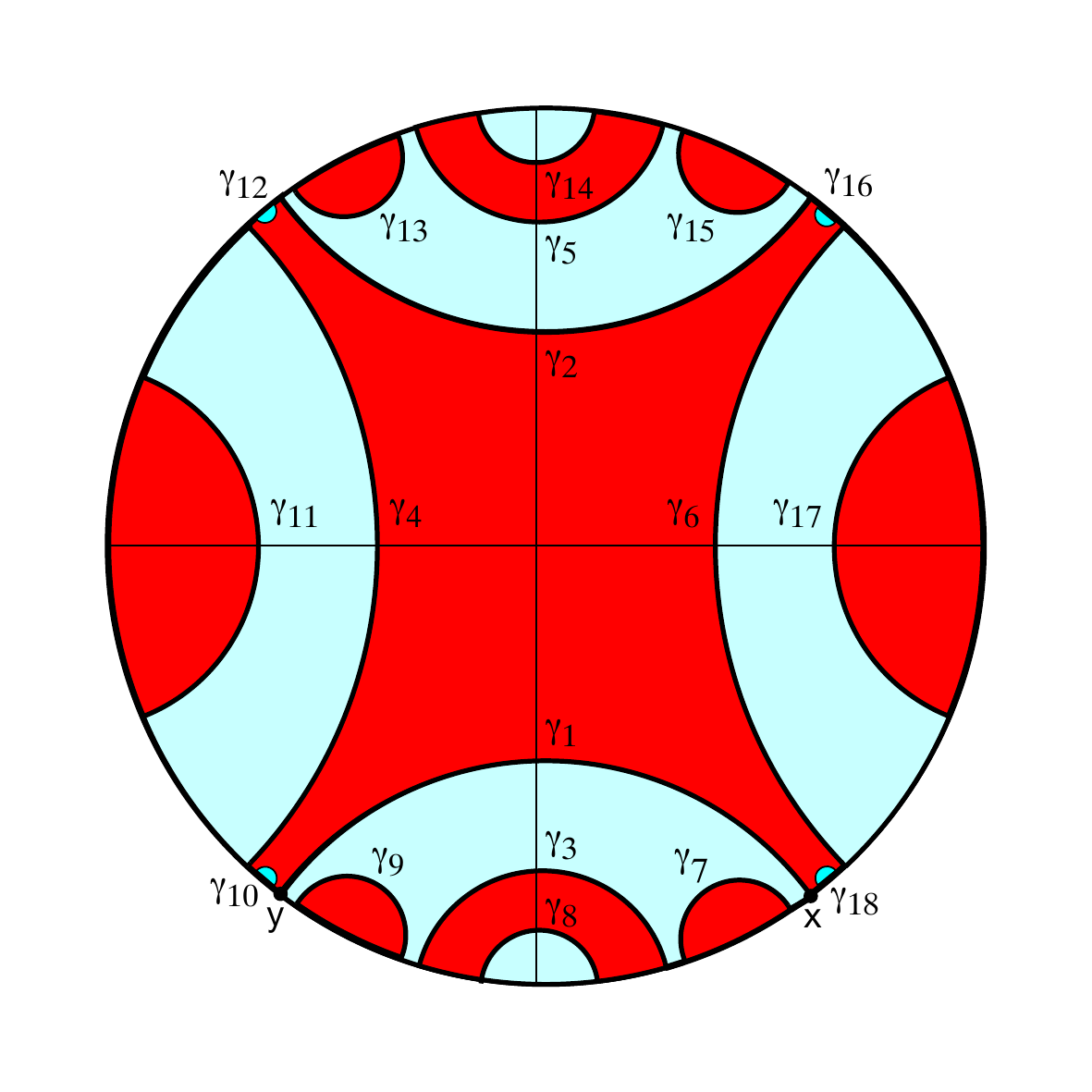}\caption{Construction 2, the first 18
geodesics.}%
\label{Constr2}%
\end{figure}

The construction of the curve $\gamma_{k+1}$ is defined by one of our extra
real parameters $\alpha,\eta>0;$ we use $\alpha$ if the curve $\gamma_{k+1}$
traverses the `$\left(  +\right)  $-phase', and $\eta$ if it goes through
`$\left(  -\right)  $-phase'. The construction is the same in both cases, so
we consider only the case of the `$\left(  +\right)  $-phase', where we use
the parameter $\alpha.$ So let $\left(  x,y\right)  $ is our chosen arc on the
absolute, and we want to pick two points $\left(  u,v\right)  $ on it, which
uniquely define the geodesic $\gamma_{k+1}=\gamma\left(  u,v\right)  ,$
joining them, see Fig.\ref{Constr2}. Suppose first that the two points $x,y$
are endpoints of the same geodesic $\gamma_{i},$ constructed earlier. Then the
pair $u,v$ is uniquely defined by the two properties:

\begin{itemize}
\item the cross-ratio $\left(  u,v;y,x\right)  =\alpha,$

\item the geodesic, which passes through $0$ and is perpendicular to
$\gamma_{i},$ is also perpendicular to $\gamma\left(  u,v\right)  .$
\end{itemize}

In the remaining case the point $x$ is an end-point of a geodesic $\gamma
_{i}=\gamma\left(  t,x\right)  ,$ while $y$ is an end-point of a geodesic
$\gamma_{j}=\gamma\left(  y,z\right)  .$ Then the pair $u,v$ is uniquely
defined by the two properties:

\begin{itemize}
\item the cross-ratio $\left(  u,v;y,z\right)  =\alpha,$

\item the cross-ratio $\left(  u,v;t,x\right)  =\alpha.$ $\blacksquare$
\end{itemize}

If $\alpha=\eta,$ the Construction 2 gives the same geodesical family as
Construction 1.

\bigskip

Let us check that indeed the above families of geodesics are geodesical
families. We will represent the Lobachevsky plane $\mathcal{L}$ as a disc
$D_{1}$ of unit radius in $\mathbb{C}^{1},$ centered at the origin $0;$ then
the absolute $\mathcal{L}^{\infty}$ will be represented by the unit circle
$C_{1}$. Let $z_{1},z_{2},z_{3},z_{4}\in\mathcal{L}^{\infty}$ be four points,
going clockwise. Note that when the points $z_{1},z_{2},z_{3},z_{4}$ are made
by an isometry of $\mathcal{L}$ into a rectangle, and the cross-ratio
$R\left(  z_{1},z_{2};z_{3},z_{4}\right)  $ is small, the side $\left[
z_{1},z_{2}\right]  $ of this rectangle is much shorter than $\left[
z_{2},z_{3}\right].$

Let $\gamma_{1},\gamma_{2},...,\gamma_{k}\subset\mathcal{L}$ be a family of
non-intersecting doubly-infinite geodesics on Lobachevsky plane $\mathcal{L}$.
For two points $x^{\prime},x^{\prime\prime}$ on the absolute $\mathcal{L}%
^{\infty}$ we denote by $\gamma\left(  x^{\prime},x^{\prime\prime}\right)  $
the geodesic connecting them. Let $\gamma_{i}=\gamma\left(  x_{i}^{\prime
},x_{i}^{\prime\prime}\right)  ,$ $i=1,2,...,k.$ The following lemma claims
that if all the geodesics $\gamma_{1},\gamma_{2},...,\gamma_{k}$ are far away
from each other, then they form a geodesical family.

\begin{lemma}
\label{family} Suppose that for some $\alpha$ small enough the geodesics
$\gamma_{i}=\gamma\left(  x_{i}^{\prime},x_{i}^{\prime\prime}\right)  ,$
$i=1,...,k$ have the property that for any $i\neq j$%
\[
R\left(  x_{i}^{\prime},x_{i}^{\prime\prime};x_{j}^{\prime},x_{j}%
^{\prime\prime}\right)  <\alpha.
\]
Then the family $\gamma_{1},\gamma_{2},...,\gamma_{k}$ is geodesical.
\end{lemma}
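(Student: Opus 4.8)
The plan is to reduce the general surgery inequality to the case $k=2$ by a matching/crossing argument, and then to handle $k=2$ directly via the cross-ratio criterion stated in the excerpt (a pair of non-crossing geodesics is geodesical iff $R<1$). First I would fix a surgery: a collection of segments $s_i=[\alpha_{2i-1},\alpha_{2i}]\subset\gamma_{j(i)}$, a permutation $\pi$ of the $2k$ endpoints, and replacement arcs $\bar s_i=[\pi(\alpha_{2i-1}),\pi(\alpha_{2i})]$. Since we may always replace each $\bar s_i$ by the geodesic segment between its endpoints without increasing length, I would assume from the outset that every $\bar s_i$ is a geodesic segment; this can only make $\Delta(\mathbf s,\bar{\mathbf s})$ smaller, so proving positivity in this case suffices. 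I would also note that the endpoints $\alpha_\ell$ all lie on the geodesics $\gamma_1,\dots,\gamma_k$, which are pairwise ``$\alpha$-far'': any two of them have cross-ratio $<\alpha$.

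**The key geometric estimate.** The heart of the matter is a quantitative bound: if two geodesics $\gamma,\gamma'$ satisfy $R(\gamma,\gamma')<\alpha$, then $\mathrm{dist}(\gamma,\gamma')\ge D(\alpha)$ with $D(\alpha)\to\infty$ as $\alpha\to 0$ (this is exactly the scale-normalization remark preceding the lemma). Consequently, for a point $p\in\gamma$ and a point $q\in\gamma'$, one has $d(p,q)\ge D(\alpha)$, while for two points $p,p'$ on the \emph{same} $\gamma_j$, $d(p,p')$ is just the length of the sub-arc of $\gamma_j$ between them. Now consider the original arcs: $\sum|s_i|$ is a sum of sub-arc lengths within single geodesics. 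After the surgery, I claim that in any genuinely different reconnection $\bar{\mathbf s}\ne\mathbf s$, at least one new arc $\bar s_i$ must join endpoints lying on two \emph{distinct} geodesics $\gamma_a\ne\gamma_b$ — because if $\pi$ preserved the partition of the $2k$ endpoints into the $k$ pairs-by-host-geodesic \emph{and} matched the two endpoints of each $s_i$ to each other, we would be back to $\bar{\mathbf s}=\mathbf s$ (up to the trivial relabeling, which the hypothesis $\bar{\mathbf s}\ne\mathbf s$ excludes). So some $\bar s_i$ has length $\ge D(\alpha)$.

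**Closing the estimate.** The remaining work is bookkeeping: I would bound $\sum_i|s_i|$ from above in terms of the geometry. Here I want to observe that the segments $s_i$ cannot be arbitrarily long if the surgery is to have any chance of decreasing length — more precisely, I would argue that one may restrict attention to ``reduced'' surgeries in which no $s_i$ is longer than, say, $D(\alpha)/100$: if some $|s_i|$ were larger, then since the $\bar s_i$ endpoints coincide with the $s_i$ endpoints as a set, one checks (using that distinct-geodesic distances are $\ge D(\alpha)$ and same-geodesic arcs recover lengths additively) that $\sum|\bar s_i|\ge \sum|s_i|$ already, giving $\Delta\ge 0$; and strictness then follows from the $k=2$ cross-ratio criterion applied to whichever pair $(\gamma_a,\gamma_b)$ gets genuinely reconnected. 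Alternatively — and this is cleaner — I would make the whole argument local: pair up the new arcs with the old ones via $\pi$ and compare arc-by-arc, using the triangle inequality $|\bar s_i|\ge \big| d(\text{far endpoints}) - (\text{short same-geodesic corrections})\big|$, so that the single ``cross-geodesic'' arc contributes a surplus $\gtrsim D(\alpha) - (\text{total length of the }s_i\text{'s})$, which is positive once $\alpha$ is small enough that $D(\alpha)$ dominates. Strict positivity, not just nonnegativity, comes from the strict inequality in the $k=2$ case (Series–Sinai-type computation of the cross-ratio, or the explicit distance formula).

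**Main obstacle.** The delicate point is \emph{uniformity in $k$}: the number of segments $k$ is unbounded, so a naive ``one surplus term beats $k$ deficit terms'' estimate fails if the $s_i$ can be long. The fix is the reduction to reduced/short surgeries above, together with the observation that a length-decreasing surgery cannot move mass between far-apart geodesics profitably — each ``long jump'' $\bar s_i$ of length $\ge D(\alpha)$ must be compensated by removing at least that much, but removal happens only \emph{within} geodesics, and the total within-geodesic length available near the relevant endpoints is controlled. Making this trade-off bookkeeping precise, with a clean induction on $k$ (removing one crossing pair at a time and invoking the $k=2$ case), is the step I expect to require the most care; everything else is the scale normalization plus the triangle inequality.
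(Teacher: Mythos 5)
Your skeleton (replace the $\bar s_i$ by geodesic segments, observe that any genuinely different reconnection must contain at least one arc joining two distinct geodesics, then induct on $k$) is in the right spirit, but the two steps you leave as ``one checks'' are precisely where the content of the lemma lies, and as written they do not close. First, your quantitative core is the bound ``surplus $\gtrsim D(\alpha)-\sum_i|s_i|$'', which you yourself note fails when the removed segments are long; your repair is to claim that for long segments one already has $\sum_i|\bar s_i|\geq\sum_i|s_i|$ because ``same-geodesic arcs recover lengths additively''. That claim is exactly the nontrivial hyperbolic estimate, and it cannot follow from triangle-inequality bookkeeping alone: in the Euclidean plane the analogous statement is false (remove aligned segments of length $\ell$ from two parallel lines at distance $d$ and cross-connect them; the increment is $2d-2\ell<0$), and indeed the lemma itself fails there. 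What is needed is a divergence estimate for ultraparallel geodesics, and this is what the paper supplies via the Series--Sinai Estimate $1_{\mathbb{H}}$, $\left\vert XZ\right\vert\geq\left\vert XY\right\vert+\left\vert YZ\right\vert-c$ for right-angled geodesic triangles: it shows that an arc joining $\gamma_i$ to $\gamma_j$ has length at least the sum of the distances from its endpoints to the feet of the common perpendicular plus $\mathrm{dist}(\gamma_i,\gamma_j)-O(1)$, so a cross-connection repays the removed arc length \emph{and} pays an extra $2L(\alpha)-O(1)$, uniformly in how long the $s_i$ are. With that estimate your ``reduced surgery'' detour becomes unnecessary; without it, the argument has a hole at its center.

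Second, your induction ``remove one crossing pair at a time and invoke the $k=2$ case'' glosses over the step that forces the paper's explicit computation. After discarding permutations that create crossing arcs and splitting into connected components, the dangerous reconnection is the nested (cyclic) pairing $\bar\pi$, and the un-nesting surgery compares arcs whose endpoints mix ends of \emph{different} original geodesics -- e.g.\ the two inner ends $x_1^{\prime\prime},x_k^{\prime}$ of the outer geodesics against the ends of an intermediate $\gamma_j$. The hypothesis $R(\gamma_i,\gamma_j)<\alpha$ says nothing directly about these mixed quadruples, so you cannot simply ``apply the $k=2$ criterion to whichever pair gets genuinely reconnected''. The paper proves, by the half-plane computation around equation (\ref{81}), that the mixed cross-ratio is still small (of order $\alpha^{1/2}$, i.e.\ $\sim 1/x$ when $\alpha\sim1/x^{2}$), and only this allows the two-geodesic step to be iterated down the nesting. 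Your proposal needs an analogous estimate (or some substitute) for the induction to go through; as it stands, both the uniform length-gain estimate and the control of the intermediate cross-ratios are asserted rather than proved.
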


\begin{proof}
Note that the condition $R\left(  x_{i}^{\prime},x_{i}^{\prime\prime}%
;x_{j}^{\prime},x_{j}^{\prime\prime}\right)  <\alpha$ implies that the
Lobachevsky distance $\lambda$ satisfies
\[
\lambda\left(  \gamma_{i},\gamma_{j}\right)  \equiv\inf_{y_{i}\in\gamma
_{i},y_{j}\in\gamma_{j}}\lambda\left(  y_{i},y_{j}\right)  >L\left(
\alpha\right)  ,
\]
with $L\left(  \alpha\right)  \rightarrow\infty$ as $\alpha\rightarrow0.$ Let
$\delta_{ij}$ be the common perpendicular to $\gamma_{i}$ and $\gamma_{j},$
and $\Delta_{ij}\in\gamma_{i},$ $\Delta_{ji}\in\gamma_{j}$ be the feet of this
perpendicular. We have that the length $\left\vert \delta_{ij}\right\vert
=\lambda\left(  \Delta_{ij},\Delta_{ji}\right)  \equiv\lambda\left(
\gamma_{i},\gamma_{j}\right)  >L\left(  \alpha\right)  .$ Define $L=\max
_{ij}\left\vert \delta_{ij}\right\vert .$

For the future use we will compute the quantity $L\left(  \alpha\right)  .$
Let $x>0$ be large, and $\gamma_{1}\left(  -x-1,-x+1\right)  ,\gamma
_{2}\left(  x-1,x+1\right)  $ be two unit semicircles in the upper
half--plane, centered at $-x$ and $x,$ i.e. geodesics in the upper half--plane
model. Then the cross-ratio $R\left(  -x-1,-x+1;x-1,x+1\right)  =\frac
{1}{x^{2}-1},$ while the distance $\mathrm{dist}\left(  \gamma_{1},\gamma
_{2}\right)  \cong\mathrm{dist}\left(  \left(  -x,1\right)  ,\left(
x,1\right)  \right)  =\cosh^{-1}\left(  1+2x^{2}\right)  \cong\ln x^{2},$ so
\begin{equation}
L\left(  \alpha\right)  \cong\ln\alpha^{-1}. \label{36}%
\end{equation}

Let us start with the case of two geodesics, $\gamma_{1}=\gamma_{1}\left(
x_{1},x_{2}\right)  $ and $\gamma_{2}=\gamma_{2}\left(  x_{3},x_{4}\right)  .$
We can suppose that $\gamma_{2}$ is symmetric to $\gamma_{1}$ with respect to
the origin $0\in\mathcal{L}.$ The common perpendicular $\delta_{12}$ passes
through $0$, and $\lambda\left(  0,\Delta_{12}\right)  =\lambda\left(
0,\Delta_{21}\right)  .$ Consider now the surgery, which is almost the same as
the pair of geodesics $\varkappa_{1}=\varkappa_{1}\left(  x_{1},x_{4}\right)
$ and $\varkappa_{2}=\varkappa_{2}\left(  x_{2},x_{3}\right)  .$ We want to
show that $\left\vert \varkappa_{1}\right\vert +\left\vert \varkappa
_{2}\right\vert \gg\left\vert \gamma_{1}\right\vert +\left\vert \gamma
_{2}\right\vert $ (renormalized in the obvious way), provided $L\left(
\alpha\right)  $ is large.

For that let us consider the triangle $0,\Delta_{12},A,$ where $A\in\gamma
_{1}$ with $\lambda\left(  A,0\right)  >L\left(  \alpha\right)  ,$ and $A$ is
on the same side from $0$ as $x_{1}.$ Note that both the distances
$\lambda\left(  0,\varkappa_{1}\right)  $ and $\lambda\left(  A,\varkappa
_{1}\right)  $ become small for small $\alpha.$ Now we will use the following

\textbf{Estimate} $1_{\mathbb{H}}$ from \cite{SS}, p. 69: Let $XYZ$ be a
triangle on $\mathcal{L},$ made by three geodesics, with the angle
$\measuredangle XYZ=\frac{\pi}{2}.$ Then%
\begin{equation}
\left\vert XZ\right\vert \geq\left\vert XY\right\vert +\left\vert
YZ\right\vert -c, \label{31}%
\end{equation}
for some universal constant $c$, independent of $X,$ $Y$ and $Z.$

It follows from $\left(  \ref{31}\right)  $ that $\left\vert 0A\right\vert
\geq\frac{1}{2}\delta_{12}+\left\vert \Delta_{12}A\right\vert -c,$ which
implies that $\left\vert \varkappa_{1}\right\vert +\left\vert \varkappa
_{2}\right\vert \geq\left\vert \gamma_{1}\right\vert +\left\vert \gamma
_{2}\right\vert +2\delta_{12}-4c-\varepsilon\left(  \alpha\right)  ,$ with
$\varepsilon\left(  \alpha\right)  \rightarrow0$ as $\alpha\rightarrow0.$
Since $\delta_{12}\rightarrow\infty$ as $\alpha\rightarrow0,$ that proves our claim.

Consider now the general case, when we have $2k$ points at the absolute,
$x_{1}^{\prime},x_{1}^{\prime\prime},x_{2}^{\prime},...,x_{k}^{\prime\prime},$
going clockwise. Note that the only permutation $\bar{\pi}$ we have to
consider is given by
\[
\bar{\pi}\left(  x_{1}^{\prime},x_{1}^{\prime\prime},x_{2}^{\prime
},...,x_{k-1}^{\prime\prime},x_{k}^{\prime},x_{k}^{\prime\prime}\right)
=x_{1}^{\prime},x_{k}^{\prime\prime},x_{k}^{\prime},x_{k-1}^{\prime\prime
},...,x_{2}^{\prime},x_{1}^{\prime\prime}.
\]
First of all, we do not have to consider permutations, for which the geodesics
$\gamma\left(  \pi\left(  x_{i}^{\prime}\right)  ,\pi\left(  x_{i}%
^{\prime\prime}\right)  \right)  $ and $\gamma\left(  \pi\left(  x_{j}%
^{\prime}\right)  ,\pi\left(  x_{j}^{\prime\prime}\right)  \right)  $
intersect for some $i\neq j.$ Indeed, let $\tau$ be a transposition,
exchanging the points $\pi\left(  x_{i}^{\prime\prime}\right)  $ and
$\pi\left(  x_{j}^{\prime\prime}\right)  ;$ it is easy to see that
\begin{multline*}
\left\vert \gamma\left(  \pi\left(  x_{i}^{\prime}\right)  ,\pi\left(
x_{i}^{\prime\prime}\right)  \right)  \right\vert +\left\vert \gamma\left(
\pi\left(  x_{j}^{\prime}\right)  ,\pi\left(  x_{j}^{\prime\prime}\right)
\right)  \right\vert >\\
\left\vert \gamma\left(  \tau\pi\left(  x_{i}^{\prime}\right)  ,\tau\pi\left(
x_{i}^{\prime\prime}\right)  \right)  \right\vert +\left\vert \gamma\left(
\tau\pi\left(  x_{j}^{\prime}\right)  ,\tau\pi\left(  x_{j}^{\prime\prime
}\right)  \right)  \right\vert .
\end{multline*}
Next, for every remaining permutation $\pi$ consider the union of all the
curves $\gamma\left(  x_{i}^{\prime},x_{i}^{\prime\prime}\right)  $ and all
the curves $\gamma\left(  \pi\left(  x_{i}^{\prime}\right)  ,\pi\left(
x_{i}^{\prime\prime}\right)  \right)  .$ This union consists of several --
say, $l$ -- connected components, $l\leq k.$ If $l>1,$ then we can consider
each component separately, thus reducing the number $2k$ of points
$x_{1}^{\prime},x_{1}^{\prime\prime},x_{2}^{\prime},...,x_{k}^{\prime\prime}$
to smaller values and use induction on $k.$

\begin{figure}[h]
\centering
\includegraphics[width=14cm]{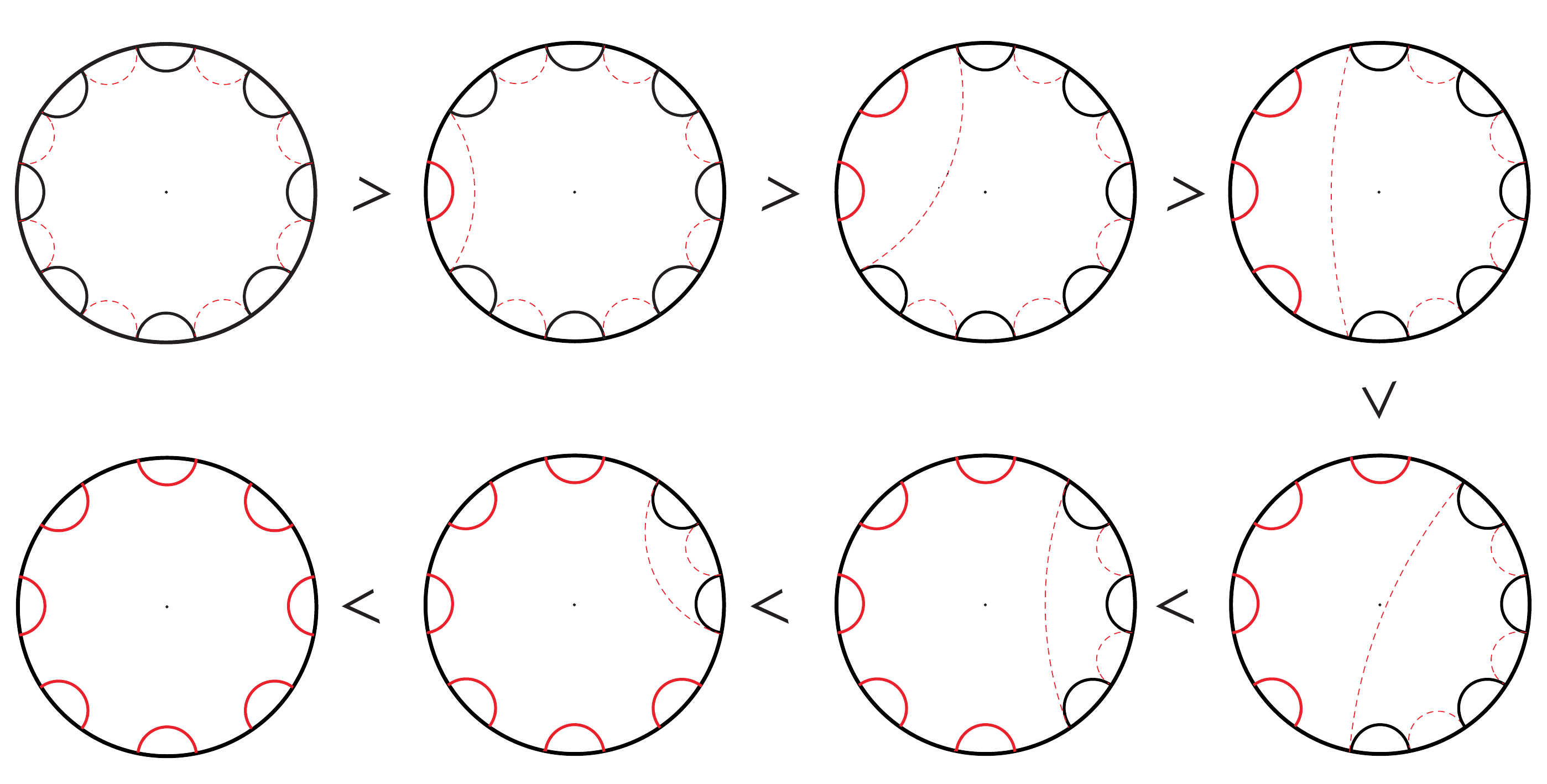}\caption{The length decrease
through surgeries.}%
\label{majoration}%
\end{figure}

We will compare the permutation $\bar{\pi}$ with another one, $\bar{\pi
}^{\prime},$ given by

\noindent$\bar{\pi}^{\prime}\left(  x_{1}^{\prime},x_{1}^{\prime\prime}%
,x_{2}^{\prime},x_{2}^{\prime\prime},...,x_{k-1}^{\prime\prime},x_{k}^{\prime
},x_{k}^{\prime\prime}\right)  =x_{1}^{\prime\prime},x_{1}^{\prime}%
,x_{2}^{\prime},x_{k}^{\prime\prime},x_{k}^{\prime},x_{k-1}^{\prime\prime
},...,x_{2}^{\prime\prime}.$ This one has two components: the first one
contains the two points $x_{1}^{\prime},x_{1}^{\prime\prime},$ while the
second one incorporates all the rest, see Fig.\ref{majoration}. We will show
that by doing surgery $\bar{\pi}\rightsquigarrow\bar{\pi}^{\prime}$ just
described -- which makes two cycles from one -- we diminish the total length.
The argument is the same for all $k,$ so we will consider the case $k=3.$ The
computations are simpler in the half--plane model, in which case the absolute
is just the real line $\mathbb{R}^{1}$.

Without loss of generality we can take two pairs of points to be
$-x-1,-x+1;x-1,x+1$, see Fig.\ref{H2}. Their cross-ratio is%
\[
R\left(  -x-1,-x+1;x-1,x+1\right)  =\frac{1}{x^{2}-1},
\]
and our assumption that the two geodesics $\gamma\left(  -x-1,-x+1\right)
,\gamma\left(  x-1,x+1\right)  $ are far away is satisfied once $x$ is large
enough. Let there be another pair, $y-z,y+z,$ lying in between,
$-x+1<y-z,y+z<x-1,$ and we want the two cross-ratios to be as small:
\[
\ \ R\left(  -x-1,-x+1;y-z,y+z\right)  =\frac{4z}{\left(  x+y\right)
^{2}-\left(  z+1\right)  ^{2}}\leq\frac{1}{x^{2}-1}.
\]
\begin{align*}
R\left(  y-z,y+z;x-1,x+1\right)   &  =\frac{4z}{\left[  x-1-y-z\right]
\left[  x+1-y+z\right]  }\\
&  \equiv\frac{4z}{\left(  x-y\right)  ^{2}-\left(  z+1\right)  ^{2}}\leq
\frac{1}{x^{2}-1}.
\end{align*}

\begin{figure}[h]
\centering
\includegraphics[width=13cm]{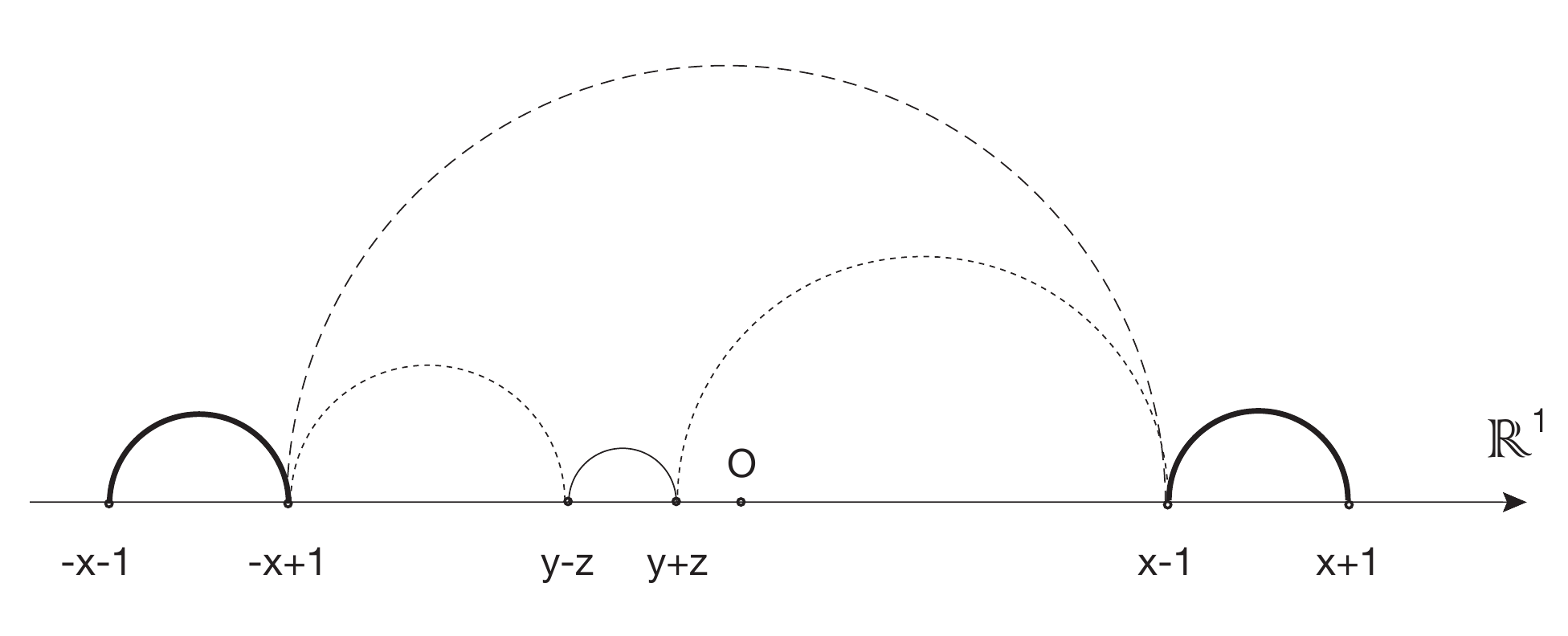}\caption{The first surgery step in the
half--plane model.}%
\label{H2}%
\end{figure}

Our goal is to show that under our assumptions we have
\[
\left\vert \gamma\left(  -x+1,y-z\right)  \right\vert +\left\vert
\gamma\left(  y+z,x-1\right)  \right\vert \gg\left\vert \gamma\left(
y-z,y+z\right)  \right\vert +\left\vert \gamma\left(  -x+1,x-1,\right)
\right\vert .
\]
As we know already, for this it is enough to check that the cross-ratio
$R\left(  y-z,y+z;x-1,-x+1\right)  $ is small. Indeed, as we will check now,
\[
R\left(  y-z,y+z;x-1,-x+1\right)  <\left[  R\left(  -x-1,-x+1;x-1,x+1\right)
\right]  ^{1/2}\sim\frac{1}{x},
\]
so we will be done.

Let $y>0,$ then we will use only the relation $\frac{4z}{\left(  x-y\right)
^{2}-\left(  z+1\right)  ^{2}}\leq\frac{1}{x^{2}-1}.$ The worst case is when
\[
\frac{4z}{\left(  x-y\right)  ^{2}-\left(  z+1\right)  ^{2}}=\frac{1}{x^{2}%
-1}.
\]
It implies that
\begin{align}
y  &  =x-\sqrt{4z\left(  x^{2}-1\right)  +\left(  z+1\right)  ^{2}}%
,\label{81}\\
y^{2}  &  =x^{2}+4z\left(  x^{2}-1\right)  +\left(  z+1\right)  ^{2}%
-2x\sqrt{4z\left(  x^{2}-1\right)  +\left(  z+1\right)  ^{2}}\nonumber
\end{align}

We want to estimate the cross-ratio
\begin{align*}
& R\left(  y-z,y+z;x-1,-x+1\right)  \\
& =\frac{2z\left(  2x-2\right)  }{\left[  x-z-1+y\right]  \left[
x-z-1-y\right]  }=\frac{2z\left(  2x-2\right)  }{\left(  x-z-1\right)
^{2}-y^{2}}%
\end{align*}
By $\left(  \ref{81}\right)  $,
\begin{align*}
& R\left(  y-z,y+z;x-1,-x+1\right)  \\
& =\frac{2z\left(  2x-2\right)  }{-2x\left(  z+1\right)  +4z\left(
x^{2}-1\right)  +2x\sqrt{4z\left(  x^{2}-1\right)  +\left(  z+1\right)  ^{2}}%
}.
\end{align*}
Since $x$ is large, and $z<1$ we have
\begin{multline*}
\frac{2z\left(  2x-2\right)  }{-2x\left(  z+1\right)  +4z\left(
x^{2}-1\right)  +2x\sqrt{4z\left(  x^{2}-1\right)  +\left(  z+1\right)  ^{2}}%
}\\
\sim\frac{4z}{-2\left(  z+1\right)  +4zx+2\sqrt{4zx^{2}+1}}<\frac
{4z}{4z\left(  x-1\right)  }=\frac{1}{x-1},
\end{multline*}
which proves our claim.
\end{proof}

\section{Foliated states}

\subsection{Rigidity of a single interface \label{Section single}}

The Ising model on $\mathcal{L}_{p,q}$ is defined by the formal Hamiltonian%
\[
H\left(  \sigma\right)  =-\sum_{u\sim v}\sigma\left(  u\right)  \sigma\left(
v\right)  ,
\]
where $u,v\in\mathcal{L}_{p,q}$ are vertices of the graph $\mathcal{L}_{p,q},$
the function $\sigma\left(  \cdot\right)  $ takes values $\pm1,$ and the
summation goes over the nearest neighbours.

Let $\Gamma\pm$ be a geodesical family with phase assignment, see Construction
2 above. For every tesselation $\mathcal{L}_{p,q}$ this family defines a spin
configuration $\sigma_{\Gamma}^{\pm}$ on $\mathcal{L}_{p,q}$ in an evident way.

The rigidity of the interface $\Sigma_{\Gamma}$ of the low temperature Ising
model on $\mathcal{L}_{p,q},$ corresponding to the boundary condition
$\sigma_{\Gamma}^{\pm},$ in the case when $\Gamma$ consists of a single
geodesic $\gamma,$ is the main result of Series and Sinai, where the following
statement is proven:

\begin{theorem}
\label{single} (see \cite{SS}) Let $\gamma$ be a geodesics, $z\in\gamma$ be an
arbitrary point, and $\Sigma_{\gamma V}$ be the interface in the box $V$,
corresponding to the boundary condition $\sigma_{\gamma}^{\pm}$. Define the
neighborhood $\mathcal{C}_{m}\left(  \gamma,z\right)  $ of the curve $\gamma$
by
\begin{equation}
\mathcal{C}_{m}\left(  \gamma,z\right)  =\cup_{y\in\gamma}B\left(
y,r_{m}\left(  y\right)  \right)  ,\label{32}%
\end{equation}
where $B\left(  y,r\right)  \subset\mathcal{L}$ is a ball of radius $r,$
centered at $y,$ and the radius $r_{m}\left(  y\right)  $ is given by
\begin{equation}
r_{m}\left(  y\right)  =\max\left\{  m,\mathrm{dist}\left(  y,z\right)
\right\}  .\label{33}%
\end{equation}
then
\begin{equation}
\mathbb{P}_{V,\beta,\sigma_{\gamma}^{\pm}}\left\{  \Sigma_{\gamma
V}\not \subset \mathcal{C}_{m}\left(  \gamma,z\right)  \right\}  \leq
\exp\left\{  -\beta m\right\}  \label{39}%
\end{equation}
uniformly in $V.$
\end{theorem}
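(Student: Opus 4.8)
The plan is to run the Peierls-type argument of Series and Sinai; the only genuinely hyperbolic ingredient is the thin-triangle Estimate $1_{\mathbb{H}}$, i.e. $\left(\ref{31}\right)$, quoted above. First I would fix the low-temperature contour representation of the Ising model on $\mathcal{L}_{p,q}$ under the Dobrushin boundary condition $\sigma_{\gamma}^{\pm}$: the set of frustrated edges decomposes into closed loops together with a single open dual path joining the two points $a,b$ where $\gamma$ meets $\partial V$, and $\Sigma_{\gamma V}$ is exactly this open path. Relative to the minimal such path $\Sigma_{0}$ (whose graph-length $L_{0}$ is comparable, through the bounded geometry of $\mathcal{L}_{p,q}$, to the Lobachevsky length of $[a,b]$), a configuration with interface $\Sigma$ carries weight $\propto e^{-2\beta(|\Sigma|-L_{0})}$ up to loop partition functions; these are disposed of by the standard Peierls renormalization of contours — an excursion of excess graph-length $\ell$ can be cut out, gaining $e^{2\beta\ell}$ in weight, and reinstated in at most $\mu^{\ell}$ ways, where $\mu$ is the connective constant of the dual graph — so that for $\beta$ large all excursions are exponentially suppressed.

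The geometric heart is a length-excess estimate. If $\Sigma$ contains a point $P$ with $\mathrm{dist}(P,\gamma)=d$ and $Q$ denotes the foot of the perpendicular from $P$ to $\gamma$, then applying $\left(\ref{31}\right)$ to the two right triangles $aQP$ and $bQP$ bounds $\mathrm{dist}(a,P)+\mathrm{dist}(P,b)$ below by $\mathrm{dist}(a,Q)+\mathrm{dist}(Q,b)+2d-2c$, and since $Q$ lies on the geodesic $\mathrm{dist}(a,Q)+\mathrm{dist}(Q,b)\geq\mathrm{dist}(a,b)$; passing between Lobachevsky and graph distances costs only bounded bi-Lipschitz factors, so reaching distance $d$ from $\gamma$ forces $\Sigma$ to have excess graph-length at least $c_{1}d-c_{2}$, with constants depending only on $p,q$. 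Combined with the Peierls bookkeeping of the previous paragraph,
\[
\mathbb{P}_{V,\beta,\sigma_{\gamma}^{\pm}}\left\{  \Sigma_{\gamma V}\ni P\right\}  \ \leq\ \sum_{\ell\geq c_{1}d-c_{2}}\mu^{\ell}e^{-2\beta\ell}\ \leq\ C_{\beta}\,e^{-c_{3}\beta\,\mathrm{dist}(P,\gamma)}
\]
for $\beta$ large, with positive constants $c_{1},c_{2},c_{3}$ and a prefactor $C_{\beta}$ not depending on $V$.

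It remains to sum over exit points. Since $\mathcal{C}_{m}(\gamma,z)=\cup_{y\in\gamma}B(y,r_{m}(y))$, a point $P$ lies outside it iff $\mathrm{dist}(P,y)>r_{m}(y)$ for every $y\in\gamma$; choosing $y=Q(P)$ this forces $\mathrm{dist}(P,\gamma)>\max\{m,\mathrm{dist}(Q(P),z)\}$. Writing $n=\lfloor\mathrm{dist}(P,\gamma)\rfloor$ and $t=\lfloor\mathrm{dist}(Q(P),z)\rfloor$, the number of vertices $P$ with a given $n$ and foot within a fixed unit segment of $\gamma$ is $O(\lambda^{n})$ — this is where the exponential volume growth of $\mathcal{L}_{p,q}$ enters, the equidistant curve at distance $n$ being dilated by $\cosh n$ — and for $\beta$ large this entropy is beaten by the factor $e^{-c_{3}\beta n}$ from the previous step, leaving a convergent geometric tail in $n$ of order $e^{-c_{4}\beta\max\{m,t\}}$. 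Summing over $t$, the range $t\leq m$ produces $O(m)$ terms of size $e^{-c_{4}\beta m}$ while $t>m$ gives a convergent tail, so altogether
\[
\mathbb{P}_{V,\beta,\sigma_{\gamma}^{\pm}}\left\{  \Sigma_{\gamma V}\not\subset\mathcal{C}_{m}(\gamma,z)\right\}  \ \leq\ C\,(m+1)\,e^{-c_{4}\beta m},
\]
uniformly in $V$. Here it is crucial that the tube radius $r_{m}(y)=\max\{m,\mathrm{dist}(y,z)\}$ grows linearly away from $z$: with a tube of constant width the sum over $t$ would accumulate a factor $\sim L_{0}$ and destroy the uniformity in $V$. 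For $\beta$ above a threshold depending only on $p,q$ the last bound is $\leq e^{-\beta m}$. I expect the only real obstacle to be making this final accounting quantitative — balancing the exponential volume growth of $\mathcal{L}_{p,q}$ against the Peierls entropy and the linear tube profile sharply enough to land on the stated constant, which is precisely where \cite{SS} carry out their optimization — rather than any conceptual difficulty.
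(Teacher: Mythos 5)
Two remarks before the substance: the paper itself contains no proof of Theorem \ref{single} --- it is quoted from Series and Sinai \cite{SS} (the paper only uses it, via its Proposition 4.1, in the proofs of Theorem \ref{finite} and the millefeuille theorem). So your sketch can only be measured against the argument of \cite{SS}, and its broad outline --- contour representation under the boundary condition $\sigma_{\gamma}^{\pm}$, Peierls surgery on excursions, the thin-triangle estimate $\left(\ref{31}\right)$ as the hyperbolic input, and exit-point entropy beaten by large $\beta$ --- is indeed the right one and matches what the paper appeals to.

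There is, however, a genuine gap at the central quantitative step, the conversion of the Lobachevsky length excess into a graph-length excess. From $\left(\ref{31}\right)$ you correctly obtain that a path from $a$ to $b$ passing through a point $P$ with $\mathrm{dist}\left(P,\gamma\right)=d$ has hyperbolic length at least $\mathrm{dist}\left(a,b\right)+2d-2c$. But you then pass to graph length by ``bounded bi-Lipschitz factors''. The two bi-Lipschitz constants are different, say $\kappa_{-}<\kappa_{+}$, and they multiply the macroscopic length $\mathrm{dist}\left(a,b\right)$, which diverges with $V$: the resulting bound on $|\Sigma|-L_{0}$ is only $2\kappa_{-}d-\left(\kappa_{+}-\kappa_{-}\right)\mathrm{dist}\left(a,b\right)-O(1)$, which is useless uniformly in $V$; the same defect reappears in the localized version, because the excursion outside the tube may have arbitrarily large horizontal extent, and the replacement path joining its endpoints then also has length proportional to that extent. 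What rescues the argument is a second hyperbolic ingredient which you invoke only in the entropy count but not where it is needed for the energy: the nearest-point projection onto $\gamma$ contracts lengths by a factor $\cosh r$ at distance $r$, so a stretch of interface staying at distance $\geq m$ from $\gamma$ over a horizontal extent $D$ already has length $\geq D\cosh m$, which dominates the $\kappa_{+}D$ cost of the replacement once $m$ exceeds a constant depending on $p,q$ only. With that (or with the estimates as actually organized in \cite{SS}) the surgery does yield an excess of order $d$, uniformly in $V$; without it, the step as you wrote it fails, and since your per-point bound $\mathbb{P}\left\{\Sigma_{\gamma V}\ni P\right\}\leq C_{\beta}e^{-c_{3}\beta\,\mathrm{dist}\left(P,\gamma\right)}$ rests on it, the gap propagates to the final estimate. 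A smaller point: your closing bound $C\left(m+1\right)e^{-c_{4}\beta m}$ gives the stated $e^{-\beta m}$ only if $c_{4}>1$, which hinges on the normalization of the metric and the lattice constants; you flag this yourself, and it is exactly the bookkeeping carried out in \cite{SS}, so I only record it.
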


Here $\mathbb{P}_{V,\beta,\sigma_{\gamma}^{\pm}}$ is the Ising model Gibbs
state in the finite box $V\subset\mathcal{L}_{p,q},$ corresponding to the
inverse temperatire $\beta$ and boundary conditions $\sigma_{\gamma}^{\pm}.$

\subsection{Rigidity of finitely many interfaces}

When $\Gamma$ consists of finitely many geodesics $\gamma_{1},...,\gamma_{k}$,
a similar result holds, and the proof needs only one extra element, as
compared with the theorem \ref{single}, which element was already used for the
lemma \ref{family}. We start with some definitions.

Let our geodesical family $\Gamma$ be defined by the sequence $X_{2k}$ of
points on the absolute, $\left\{  x_{1}^{\prime},x_{1}^{\prime\prime
},...,x_{k}^{\prime},x_{k}^{\prime\prime}\right\}  ,$ going clockwise. In
other words, $\Gamma=\left\{  \gamma_{1}\left(  x_{1}^{\prime},x_{1}%
^{\prime\prime}\right)  ,...,\gamma_{k}\left(  x_{k}^{\prime},x_{k}%
^{\prime\prime}\right)  \right\}  $. The family $\Gamma$ defines the partition
$\Pi_{0}$ of $X_{2k}$ into $k$ pairs: $\left(  x_{1}^{\prime},x_{1}%
^{\prime\prime}\right)  ,...,\left(  x_{k}^{\prime},x_{k}^{\prime\prime
}\right)  .$ Every spin configuration $\sigma$ has its interface collection
$\Sigma_{\Gamma}\left(  \sigma\right)  ,$ and so defines a partition
$\Pi\left(  \sigma\right)  $ of $X_{2k}$ into pairs. The partition $\Pi_{0}$
should be called the ground state partition.

(Of course, the observable $\Pi\left(  \sigma\right)  ,$ as defined, is not
local. One has to talk about finite boxes $V$ with boundary condition
$\sigma_{\Gamma}^{\pm},$ and then the corresponding observable $\Pi_{V}\left(
\sigma\right)  $ is local, evidently. However, our estimates will be uniform
in $V,$ so we will talk about the observable $\Pi\left(  \sigma\right)  ,$
omitting the index $V.$)

\begin{theorem}
\label{finite}

\textbf{1. }Suppose the geodesical family $\left(  \Gamma,\pm\right)
=\left\{  \gamma_{1},...,\gamma_{k}\right\}  $ satisfies the conditions of
Lemma \ref{family} with $\alpha$ small. Then for every $V$ finite the
probability of the event $\Pi\left(  \sigma\right)  \neq\Pi_{0}$ satisfies%
\begin{equation}
\mathbb{P}_{V,\beta,\sigma_{\Gamma}^{\pm}}\left\{  \Pi\left(  \sigma\right)
\neq\Pi_{0}\right\}  \leq C\left(  k\right)  \exp\left\{  -\beta C\left(
\alpha\right)  \right\}  , \label{42}%
\end{equation}
where $C\left(  k\right)  \sim k^{2},$ while $C\left(  \alpha\right)  \sim
\ln\alpha^{-1}\rightarrow\infty$ as $\alpha\rightarrow0.$ In words, the
typical collection $\Sigma_{\Gamma}$ of $k$ interfaces pairs the points
$x_{1}^{\prime},x_{1}^{\prime\prime},...,x_{k}^{\prime},x_{k}^{\prime\prime}$
in the `correct' way: $\left(  x_{1}^{\prime},x_{1}^{\prime\prime}\right)
,...,\left(  x_{k}^{\prime},x_{k}^{\prime\prime}\right)  .$

\textbf{2. }Let $z_{i}\in\gamma_{i},$ $i=1,...,k$ be an arbitrary collection
of points on geodesics $\gamma_{i},$ and the sets $\mathcal{C}_{m}\left(
\gamma_{i},z_{i}\right)  $ are defined by $\left(  \ref{32},\ref{33}\right)
.$ Then under condition that $\Pi\left(  \sigma\right)  =\Pi_{0}$ we have%
\begin{equation}
\mathbb{P}_{V,\beta,\sigma_{\Gamma}^{\pm}}\left\{  \Sigma_{\Gamma}\left(
\sigma\right)  \not \subset \cup_{i=1}^{k}\left(  \mathcal{C}_{m}\left(
\gamma_{i},z_{i}\right)  \right)  {\LARGE |}\Pi\left(  \sigma\right)  =\Pi
_{0}\right\}  \leq k\exp\left\{  -\beta m\right\}  , \label{41}%
\end{equation}
uniformly in $V$ and the set $\left\{  z_{i},i=1,...,k\right\}  .$ In
particular, for the value $m\left(  \alpha\right)  =\frac{1}{2}\ln\alpha^{-1}$
we have
\[
\mathbb{P}_{V,\beta,\sigma_{\Gamma}^{\pm}}\left\{  \Sigma_{\Gamma}\left(
\sigma\right)  \not \subset \cup_{i=1}^{k}\left(  \mathcal{C}_{m\left(
\alpha\right)  }\left(  \gamma_{i},z_{i}\right)  \right)  \right\}  \leq
C\left(  k\right)  \exp\left\{  -\beta m\left(  \alpha\right)  \right\}  .
\]

\end{theorem}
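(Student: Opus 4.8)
The plan is to reduce both parts of Theorem~\ref{finite} to the single-interface estimate of Theorem~\ref{single} together with the surgery mechanism used in Lemma~\ref{family}, controlling everything by a Peierls-type argument. For Part~1, fix a finite box $V$ and work with the boundary condition $\sigma_{\Gamma}^{\pm}$. A configuration $\sigma$ for which $\Pi(\sigma)\ne\Pi_{0}$ must have its interface collection $\Sigma_{\Gamma}(\sigma)$ pairing the boundary points $x_{1}^{\prime},x_{1}^{\prime\prime},\dots,x_{k}^{\prime},x_{k}^{\prime\prime}$ in a way different from $\Pi_{0}$; since the $\gamma_{i}$ are non-intersecting, any such `wrong' pairing is obtained from $\Pi_{0}$ by a sequence of the elementary re-wirings considered in Lemma~\ref{family}. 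The key point, exactly as in that lemma, is that each such re-wiring forces the total interface length to exceed the ground-state length $|\Sigma_{\Gamma}|$ by at least $c\,L(\alpha)\sim c\ln\alpha^{-1}$ for some constant $c>0$ independent of $V$, because at least one pair of far-apart geodesics (Lobachevsky distance $>L(\alpha)$) gets interconnected across the gap. First I would make this precise: given a contour configuration realizing a non-ground-state pairing, extract from it an actual collection of curves joining the prescribed boundary points, observe that its length dominates $\sum|\bar s_i|$ for an admissible surgery, and invoke the estimate $\Delta(\mathbf s,\bar{\mathbf s})\ge 2\delta-4c-\varepsilon(\alpha)$ established in the proof of Lemma~\ref{family}, summed over the (at most $\sim k^{2}$) components.

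Next I would run the standard Peierls bound. Conditioned on $\sigma_{\Gamma}^{\pm}$, the probability of a configuration is proportional to $\exp\{-2\beta(\text{length of }\Sigma_{\Gamma}(\sigma))\}$ relative to a ground state, so
\[
\mathbb{P}_{V,\beta,\sigma_{\Gamma}^{\pm}}\{\Pi(\sigma)\ne\Pi_{0}\}
\le \sum_{\text{wrong pairings}} \;\sum_{\text{contours}} \exp\{-2\beta(\ell-\ell_{0})\},
\]
where $\ell-\ell_{0}\ge c\ln\alpha^{-1}$ uniformly. The entropy of contours at a given excess length is controlled by the usual bounded-connectivity estimate on $\mathcal{L}_{p,q}$ (the same one underlying Theorem~\ref{single}), which costs only a fixed exponential factor per unit length and is absorbed into the constant in the exponent; the number of topologically distinct wrong pairings of $2k$ cyclically ordered points is at most a combinatorial factor $\sim C(k)\sim k^{2}$ after the component-reduction argument. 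This yields $\mathbb{P}_{V,\beta,\sigma_{\Gamma}^{\pm}}\{\Pi(\sigma)\ne\Pi_{0}\}\le C(k)\exp\{-\beta C(\alpha)\}$ with $C(\alpha)\sim\ln\alpha^{-1}$, which is exactly \eqref{42}.

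For Part~2 I would argue that, conditioned on $\Pi(\sigma)=\Pi_{0}$, the interface component joining $x_{i}^{\prime}$ to $x_{i}^{\prime\prime}$ behaves, up to the interaction with the other components, like a single Series--Sinai interface for the geodesic $\gamma_{i}$. The cleanest route is a conditioning/monotonicity argument: fix the other $k-1$ interface components arbitrarily and note that the conditional law of the $i$-th component, in the region it is confined to, is an Ising interface with boundary data equivalent to $\sigma_{\gamma_{i}}^{\pm}$ on a (possibly smaller) box; applying Theorem~\ref{single} to that box gives $\mathbb{P}\{\Sigma_{i}\not\subset\mathcal{C}_{m}(\gamma_{i},z_{i})\}\le\exp\{-\beta m\}$ uniformly. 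A union bound over $i=1,\dots,k$ gives \eqref{41}. The final displayed inequality is then immediate: taking $m(\alpha)=\tfrac12\ln\alpha^{-1}$, combine \eqref{42} and \eqref{41} via $\mathbb{P}(A)\le\mathbb{P}(A\mid B)+\mathbb{P}(B^{c})$ with $B=\{\Pi(\sigma)=\Pi_{0}\}$, and note $C(\alpha)\sim\ln\alpha^{-1}\ge m(\alpha)$ so the $\Pi\ne\Pi_{0}$ term is dominated.

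The main obstacle I anticipate is the conditioning step in Part~2 — making rigorous the claim that, given the correct pairing, each interface component is stochastically no worse than an independent Series--Sinai interface confined to its strip. One must be careful that conditioning on $\Pi(\sigma)=\Pi_{0}$ does not introduce long-range constraints that defeat the local Peierls estimate; the resolution is to condition instead on the full configuration outside a neighborhood of $\gamma_{i}$ (which only makes the boundary condition `more $\pm$' in the FKG/Peierls sense relative to $\sigma_{\gamma_{i}}^{\pm}$, because the other geodesics are at distance $>L(\alpha)\gg m$), and to check that the Peierls weights for excursions of $\Sigma_{i}$ out of $\mathcal{C}_{m}(\gamma_{i},z_{i})$ are unchanged by the presence of the distant phases. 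Everything else is a routine adaptation of Theorem~\ref{single} and the surgery bound of Lemma~\ref{family}.
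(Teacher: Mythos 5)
Your Part~1 argument contains a genuine gap. The ``standard Peierls bound''
\[
\mathbb{P}_{V,\beta,\sigma_{\Gamma}^{\pm}}\left\{ \Pi\left( \sigma\right) \neq\Pi_{0}\right\}
\leq \sum \exp\left\{ -2\beta\left( \ell-\ell_{0}\right) \right\}
\]
with the entropy ``absorbed into the constant in the exponent'' does not work for interface events: the number of interface collections with a given \emph{excess} length $\ell-\ell_{0}$ is proportional to (an exponential of) the full interface length, which grows with $V$, not to the excess; and bounding the partition function below by the ground-state weight alone throws away the entropy of the correctly paired interfaces, which is exactly what is needed to cancel that count. This is the classical obstruction that forces Dobrushin's and Series--Sinai's machinery in the first place, and your estimate as written cannot be made uniform in $V$. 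The paper never sums over contours directly; instead it reduces the wrong-pairing event to the confinement estimate $\left( \ref{41}\right)$, which is noted to hold for \emph{any} allowed partition, via a dichotomy on the distance $d\left( \sigma\right)$ between the two wrongly paired interfaces. If $d\left( \sigma\right) >c_{1}L\left( \alpha\right)$, then since the `wrong' geodesics $\tilde{\gamma}_{1},\tilde{\gamma}_{2}$ are close to each other (their distance tends to $0$ as $\alpha\rightarrow0$), the interfaces must exit the neighborhoods $\mathcal{C}_{m}\left( \tilde{\gamma}_{i},\tilde{z}_{i}\right)$ with $m=\frac{1}{2}c_{1}L\left( \alpha\right)$, and $\left( \ref{41}\right)$ applied to the wrong partition gives $\left( \ref{45}\right)$. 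If $d\left( \sigma\right) \leq c_{1}L\left( \alpha\right)$, one flips the spins in a tube of width $C_{3}$ around a path $\delta\left( \sigma\right)$ joining the two interfaces: this produces a \emph{correctly} paired configuration whose interfaces deviate from the $\gamma_{i}$ by at least $\frac{1}{2}\left( 1-c_{1}\right) L\left( \alpha\right) -C_{3}$, at an energy cost at most $C\left( p,q\right) c_{1}L\left( \alpha\right)$, and is then again controlled by $\left( \ref{41}\right)$; optimizing $c_{1}$ yields $\left( \ref{42}\right)$. This transfer map (and the observation that $\left( \ref{41}\right)$ applies to arbitrary allowed pairings) is the key idea missing from your sketch, and Lemma~\ref{family} alone does not substitute for it.

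For Part~2 your route also diverges from the paper, which simply derives $\left( \ref{41}\right)$ from Proposition~4.1 of \cite{SS}. Your FKG/conditioning scheme is not obviously sound: the exit event $\Sigma_{\Gamma}\left( \sigma\right) \not\subset \cup_{i}\mathcal{C}_{m}\left( \gamma_{i},z_{i}\right)$ is not monotone, and conditioning on $\Pi\left( \sigma\right) =\Pi_{0}$ (or on the configuration outside a neighborhood of $\gamma_{i}$) does not manifestly reproduce the single-interface boundary condition $\sigma_{\gamma_{i}}^{\pm}$ of Theorem~\ref{single}; this step would need a real argument rather than an appeal to monotonicity. The final deduction of the last display from $\left( \ref{42}\right)$ and $\left( \ref{41}\right)$ via $\mathbb{P}\left( A\right) \leq \mathbb{P}\left( A\,|\,B\right) +\mathbb{P}\left( B^{c}\right)$ is fine and matches the intended reading of the theorem.
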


\begin{proof}
The relation $\left(  \ref{41}\right)  $ follows in a straightforward way from
the Proposition 4.1 of \cite{SS}, even after replacing the partition $\Pi_{0}$
by any other allowed partition.

To see $\left(  \ref{42}\right)  ,$ let us start with the case $k=2.$ The
event we are interested in is that the partition $\Pi\left(  \sigma\right)  $
is `wrong': the two interfaces from $\Sigma_{\Gamma}\left(  \sigma\right)  $
connect $x_{1}^{\prime}$ to $x_{2}^{\prime\prime},$ and $x_{1}^{\prime\prime}$
-- to $x_{2}^{\prime}.$ Consider the `wrong' geodesics $\tilde{\gamma}%
_{1}\left(  x_{1}^{\prime},x_{2}^{\prime\prime}\right)  $ and $\tilde{\gamma
}_{2}\left(  x_{1}^{\prime\prime},x_{2}^{\prime}\right)  .$ As we know
already, the surplus -- or the difference -- $\left\vert \tilde{\gamma}%
_{1}\right\vert +\left\vert \tilde{\gamma}_{2}\right\vert -\left\vert
\gamma_{1}\right\vert -\left\vert \gamma_{2}\right\vert \sim2L\left(
\alpha\right)  =2\ln\alpha^{-1}$ is the minimal extra length the interface
$\Sigma_{\Gamma}\left(  \sigma\right)  $ has to pay for the wrong connection,
and this is the reason for $\left(  \ref{42}\right)  $ to hold. The argument
goes as follows.

Let $d\left(  \sigma\right)  $ be the distance between the two interfaces
$\eta_{1},\eta_{2}$, making $\Sigma_{\Gamma}\left(  \sigma\right)  .$ Suppose
first that $d\left(  \sigma\right)  \leq c_{1}L\left(  \alpha\right)  ,$ for
some suitable constant $c_{1}$ to be specified later. Let $\delta\left(
\sigma\right)  $ be the corresponding path, connecting $\eta_{1}$ to $\eta
_{2},$ $\left\vert \delta\right\vert \leq c_{1}L\left(  \alpha\right)  .$
Denote by $D\left(  \sigma\right)  =\max\left(  \mathrm{dist}\left(
\delta,\gamma_{1}\right)  ,\mathrm{dist}\left(  \delta,\gamma_{2}\right)
\right)  .$ Note that
\[
D\left(  \sigma\right)  \geq\frac{1}{2}\left(  1-c_{1}\right)  L\left(
\alpha\right)  .
\]
Consider the tubular neighborhood $\Delta\left(  \sigma\right)  $ of
$\delta\left(  \sigma\right)  $ of width $C_{3},$ and let us perform a Peierls
transformation of $\sigma\rightarrow\sigma^{\prime},$ flipping $\sigma$ inside
the contour $\partial=\partial\Delta\left(  \sigma\right)  .$ The result on
$\eta_{1},\eta_{2}$ of this surgery is a new pair $\eta_{1}^{\prime},\eta
_{2}^{\prime},$ connecting now $x_{1}^{\prime}$ to $x_{1}^{\prime\prime}$, and
$x_{2}^{\prime}$ -- to $x_{2}^{\prime\prime},$ so $\Pi\left(  \sigma^{\prime
}\right)  =\Pi_{0}.$ Note that the Hausdorff distance $\mathrm{dist}_{H}$
satisfies%
\[
\max_{i=1,2}\mathrm{dist}_{H}\left(  \eta_{i}^{\prime},\gamma_{i}\right)  \geq
D\left(  \sigma\right)  -C_{3},
\]
thus the event $\Sigma_{\Gamma}\left(  \sigma^{\prime}\right)  \not \subset
\cup_{i=1}^{2}\left(  \mathcal{C}_{m}\left(  \gamma_{i},z_{i}\right)  \right)
$ happens, with $m=D\left(  \sigma\right)  -C_{3}.$ Here the points $z_{i}%
\in\gamma_{i}$ are chosen in such a way that $\mathrm{dist}\left(  z_{1}%
,z_{2}\right)  =\mathrm{dist}\left(  \gamma_{1},\gamma_{2}\right)  .$ On the
other hand, $H\left(  \sigma^{\prime}\right)  -H\left(  \sigma\right)  \leq
C\left(  m,q\right)  \left\vert \delta\left(  \sigma\right)  \right\vert \leq
C\left(  m,q\right)  c_{1}L\left(  \alpha\right)  $ for some constant
$C\left(  m,q\right)  ,$ which arises due to the difference of the metrics on
$\mathcal{L}$ and $\mathcal{L}_{p,q}.$ The usual Peierls argument computations
together with $\left(  \ref{41}\right)  $ imply that
\begin{align}
&  \mathbb{P}_{V,\beta,\sigma_{\Gamma}^{\pm}}\left\{  \sigma:\Pi\left(
\sigma\right)  \neq\Pi_{0},d\left(  \sigma\right)  \leq c_{1}L\left(
\alpha\right)  \right\} \nonumber\\
&  \leq C_{4}\exp\left\{  \beta C\left(  p,q\right)  c_{1}L\left(
\alpha\right)  \right\}  \exp\left\{  -\beta\left(  D\left(  \sigma\right)
-C_{3}\right)  \right\} \label{44}\\
&  =C_{4}\exp\left\{  -\beta\left(  \frac{1}{2}\left(  1-c_{1}\right)
L\left(  \alpha\right)  -C_{3}-C\left(  p,q\right)  c_{1}L\left(
\alpha\right)  \right)  \right\}  .
\end{align}

In the opposite case, when $d\left(  \sigma\right)  >c_{1}L\left(
\alpha\right)  ,$ we note that necessarily $\Sigma_{\Gamma}\left(
\sigma\right)  \not \subset \cup_{i=1}^{2}\left(  \mathcal{C}_{m}\left(
\tilde{\gamma}_{i},\tilde{z}_{i}\right)  \right)  $ for $\tilde{z}_{1}%
,\tilde{z}_{2}$ defined by $\mathrm{dist}\left(  \tilde{z}_{1},\tilde{z}%
_{2}\right)  =\mathrm{dist}\left(  \tilde{\gamma}_{1},\tilde{\gamma}%
_{2}\right)  ,$and $m=\frac{1}{2}c_{1}L\left(  \alpha\right)  .$ (Here we use
the obvious fact that $\mathrm{dist}\left(  \tilde{\gamma}_{1},\tilde{\gamma
}_{2}\right)  $ goes to $0$ as $\alpha\rightarrow0.)$ Therefore, again by
$\left(  \ref{41}\right)  ,$
\begin{equation}
\mathbb{P}_{V,\beta,\sigma_{\Gamma}^{\pm}}\left\{  \sigma:\Pi\left(
\sigma\right)  \neq\Pi_{0},d\left(  \sigma\right)  \leq c_{1}L\left(
\alpha\right)  \right\}  \leq2\exp\left\{  -\frac{1}{2}\beta c_{1}L\left(
\alpha\right)  \right\}  . \label{45}%
\end{equation}
Comparing $\left(  \ref{44}\right)  $ with $\left(  \ref{45}\right)  ,$ we see
that the optimal choice of $c_{1}$ is given by
\[
c_{1}=\frac{1}{2+C\left(  p,q\right)  },
\]
so our claim follows. (It is easy to see, by the way, that $C\left(
p,q\right)  $ is bounded by a universal constant.)

The case of $k>2$ is treated in the same way as above, using the computations
made in the proof of the Lemma \ref{family}.
\end{proof}

\subsection{Rigidity of the \textbf{Foliated State interfaces}}

As we saw in the previous subsection, in the low-temperature state defined by
the boundary condition $\sigma_{\Gamma}^{\pm},$ corresponding to some finite
geodesical family $\left(  \Gamma,\pm\right)  =\left\{  \gamma_{1}%
,...,\gamma_{k}\right\}  ,$ we have $\Pi\left(  \sigma\right)  =\Pi_{0}$ for a
typical configuration $\sigma.$ However, the temperature for which this claim
is true, goes to $0$ as $k\rightarrow\infty.$ Moreover, in the case of
infinitely many interfaces, corresponding to the millefeuille family
$\Gamma_{\infty}\left(  \alpha,\eta\right)  $ and the boundary condition
$\sigma_{\Gamma_{\infty}}^{\pm}$ the probability $\mathbb{P}_{V,\beta
,\sigma_{\Gamma_{\infty}}^{\pm}}\left\{  \Pi\left(  \sigma\right)  =\Pi
_{0}\right\}  $ goes to zero as $V\rightarrow\infty$ for every finite
temperature. Indeed, for every geodesic $\gamma\in\Gamma_{\infty}$ there are
infinitely many curves $\tilde{\gamma}$ in $\Gamma_{\infty},$ for which
$\mathrm{dist}\left(  \gamma,\tilde{\gamma}\right)  \leq\max\left\{  L\left(
\alpha\right)  ,L\left(  \eta\right)  \right\}  .$ Therefore, the probability
in the state $\mathbb{P}_{\beta,\sigma_{\Gamma_{\infty}}^{\pm}}$that somewhere
along $\gamma$ the surgery between it and $\tilde{\gamma}$ will happen, equals
1 for any positive temperature $\beta^{-1}.$ Nevertheless, the density of
these surgeries, for $\alpha$ small, goes to zero as $\beta\rightarrow\infty,$
which means that in the vicinity of any fixed point the probability that such
a surgery happens there is small (but it does depend on the size of the
neighborhood -- as is also the case for the rigidity property of the Dobrushin
interface in $\mathbb{Z}^{3}$).

Now we will formulate one version of the theorem which makes this claim
rigorous. Let $\gamma_{1},\gamma_{2}\in\Gamma_{\infty}\left(  \alpha
,\eta\right)  $ be the first two geodesics of the family, say, and let the
point $Z\in\mathcal{L}$ be defined by $\mathrm{dist}\left(  Z,\gamma
_{1}\right)  =\mathrm{dist}\left(  Z,\gamma_{2}\right)  =\frac{1}%
{2}\mathrm{dist}\left(  \gamma_{1},\gamma_{2}\right)  .$ In other words, the
point $Z$ is the symmetry center of the pair $\gamma_{1},\gamma_{2}.$ As
before, we denote by $\sigma_{\Gamma_{\infty}}^{\pm}$ some ground state
configuration, corresponding to the geodesical family $\Gamma_{\infty}\left(
\alpha,\eta\right)  $. Without loss of generality we can assume that
$\sigma_{\Gamma_{\infty}}^{\pm}\left(  t\right)  =+1$ for sites $t\in
U_{r}\left(  Z\right)  \subset\mathcal{L}_{p,q}$ which are at distance $\leq
r$ from the point $Z\in\mathcal{L};$ here $r$ is some finite radius, much
smaller than our scale $\ln\alpha^{-1}.$ Let $V\subset\mathcal{L}_{p,q}$ be a
finite box with boundary condition $\sigma_{\Gamma_{\infty}}^{\pm}%
{\LARGE |}_{V^{c}}$ defined by $\sigma_{\Gamma_{\infty}}^{\pm},$ and
$\sigma_{V}\in\Omega_{V}$ be a spin configuration in $V.$ Let $\Sigma\left(
\sigma_{\Gamma_{\infty}}^{\pm}\right)  $ be the (countable) collection of all
(infinite) contours of $\sigma_{\Gamma_{\infty}}^{\pm},$ while $\Sigma\left(
\sigma_{\Gamma_{\infty}}^{\pm}{\LARGE |}_{V^{c}}\cup\sigma_{V}\right)  $ --
the collection of all infinite contours of $\sigma_{\Gamma_{\infty}}^{\pm
}{\LARGE |}_{V^{c}}\cup\sigma_{V}.$ Denote by $\Lambda\left(  \sigma
_{V}\right)  $ the symmetric difference $\Sigma\left(  \sigma_{\Gamma_{\infty
}}^{\pm}\right)  \bigtriangleup\Sigma\left(  \sigma_{\Gamma_{\infty}}^{\pm
}{\LARGE |}_{V^{c}}\cup\sigma_{V}\right)  ;$ it consists of finitely many
closed contours.

\begin{theorem}
Let the parameters $\alpha$ and $\eta$ are small enough, and the temperature
$\beta^{-1}$ is low. Then typically the interfaces $\Sigma\left(
\sigma_{\Gamma_{\infty}}^{\pm}{\LARGE |}_{V^{c}}\cup\sigma_{V}\right)  $ are
far away from the point $Z,$ and the phase $\left\langle \cdot\right\rangle
_{\beta,\sigma_{\Gamma_{\infty}}^{\pm}}$ is $e^{-\beta\ln\alpha^{-1}}$-close
to the phase $\left\langle \cdot\right\rangle _{+},$ when restricted to the
box $U_{r}\left(  Z\right)  .$ Explicitly, for some $C$ and $C\left(
r\right)  $%
\[
\mathbb{P}_{V,\beta,\sigma_{\Gamma_{\infty}}^{\pm}}\left(  \Lambda\left(
\sigma_{V}\right)  \cap U_{r}\left(  Z\right)  \neq\varnothing\right)  \leq
C\left(  r\right)  \exp\left\{  -C\beta\ln\alpha^{-1}\right\}
\]
uniformly in $V.$ In words, the interfaces typically are far away from
$U_{r}\left(  Z\right)  .$
\end{theorem}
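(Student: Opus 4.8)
The plan is a Peierls argument localised at $Z$, whose only genuinely new ingredient is the geometric fact that $Z$ lies at hyperbolic distance at least $\tfrac12\ln\alpha^{-1}$ from \emph{every} geodesic of $\Gamma_{\infty}(\alpha,\eta)$ --- not just from $\gamma_{1}$ and $\gamma_{2}$. To see this, note that by the definition of $Z$ and of the scale on $\mathcal{L}$, together with $(\ref{36})$, $\mathrm{dist}(Z,\gamma_{1})=\mathrm{dist}(Z,\gamma_{2})=\tfrac12\,\mathrm{dist}(\gamma_{1},\gamma_{2})=\tfrac12 L(\alpha)=\tfrac12\ln\alpha^{-1}$. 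For any other $\kappa\in\Gamma_{\infty}$: if $\kappa$ lies on the side of $\gamma_{1}$, or of $\gamma_{2}$, opposite to $Z$, then every path from $Z$ to $\kappa$ crosses $\gamma_{1}$, resp.\ $\gamma_{2}$, so $\mathrm{dist}(Z,\kappa)\geq\tfrac12\ln\alpha^{-1}$; if $\kappa$ lies in the strip between $\gamma_{1}$ and $\gamma_{2}$, then a short induction over the steps of the Construction --- using that the point of $\kappa$ closest to $Z$ lies on the common perpendicular of $\gamma_{1},\gamma_{2}$ (or on an isometric copy of it), that consecutive curves have cross-ratio $\alpha$, the triangle inequality, and $(\ref{36})$ --- gives $\mathrm{dist}(Z,\kappa)\geq\tfrac12\ln\alpha^{-1}$ as well; the case $\alpha\neq\eta$ is the same, using Construction~2. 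Since the contours of the ground configuration $\sigma_{\Gamma_{\infty}}^{\pm}$ lie within a bounded distance $C_{1}=C_{1}(p,q)$ of the geodesics, the ball $B=U_{\rho_{0}}(Z)$ with $\rho_{0}=\tfrac12\ln\alpha^{-1}-C_{1}-1$ contains no contour of $\sigma_{\Gamma_{\infty}}^{\pm}$ at all, so $\Sigma(\sigma_{\Gamma_{\infty}}^{\pm})\cap B=\varnothing$ and $\sigma_{\Gamma_{\infty}}^{\pm}\equiv+1$ on $B$; I may assume $r<\rho_{0}$, which is legitimate because $r$ is fixed while $\tfrac12\ln\alpha^{-1}\to\infty$.

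Now suppose $\Lambda(\sigma_{V})\cap U_{r}(Z)\neq\varnothing$; let $\lambda$ be the connected component of $\Lambda(\sigma_{V})$ meeting $U_{r}(Z)$, and write $\tau=\sigma_{\Gamma_{\infty}}^{\pm}|_{V^{c}}\cup\sigma_{V}$. First, $\lambda\not\subset B$: otherwise $\lambda$ would be a closed loop inside $B$, whereas on $B$ one has $\Lambda(\sigma_{V})=\Sigma(\sigma_{\Gamma_{\infty}}^{\pm})\,\triangle\,\Sigma(\tau)=\Sigma(\tau)$ (the first collection avoiding $B$), and $\Sigma(\tau)$ consists of bi-infinite simple contours, each of which must leave the ball, so its trace on $B$ is a union of arcs with endpoints on $\partial B$ and contains no closed loop --- a contradiction. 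Hence $\lambda$ contains a point within $r$ of $Z$ and a point at distance $\geq\rho_{0}$ from $Z$. Next, since $\Gamma_{\infty}$ is a geodesical family, the configuration $\sigma_{\Gamma_{\infty}}^{\pm}$ realises the length-minimal system of interfaces for the boundary data imposed on $\partial V$ (any bounded surgery changing it strictly increases the total length, and only finitely many curves are involved); combining this with Estimate~$1_{\mathbb{H}}$, $(\ref{31})$, applied to the perpendicular dropped from a point of $\lambda\cap U_{r}(Z)$ onto the geodesic carrying the interface to which $\lambda$ is attached --- exactly the computation behind Lemma~\ref{family} and Theorem~\ref{finite} --- shows that the interface system of $\tau$ is longer, inside $V$, than that of $\sigma_{\Gamma_{\infty}}^{\pm}$ by at least $2\rho_{0}-2r-O(1)=\ln\alpha^{-1}-O(1+r)$. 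Consequently the Peierls flip of $\sigma_{V}$ inside $\lambda$ removes $\lambda$ from $\Lambda(\sigma_{V})$ and lowers the energy by at least $c_{1}\bigl(\ln\alpha^{-1}-O(1+r)\bigr)$, the constant $c_{1}=c_{1}(p,q)>0$ absorbing the distortion between the metrics of $\mathcal{L}$ and $\mathcal{L}_{p,q}$ (the analogue of $C(p,q)$ in the proof of Theorem~\ref{finite}).

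The estimate now follows by the standard bookkeeping of that proof: summing the Peierls weight over the at most $C(r)$ sites of $U_{r}(Z)\cap\mathcal{L}_{p,q}$ and over the closed contours through a fixed site, of which there are at most $K^{\ell}$ of length $\ell$ for some $K=K(p,q)$, and using that each such term is at most $e^{-c_{1}\beta(\ln\alpha^{-1}-O(1+r))}$, one obtains for $\beta$ large
\[
\mathbb{P}_{V,\beta,\sigma_{\Gamma_{\infty}}^{\pm}}\bigl(\Lambda(\sigma_{V})\cap U_{r}(Z)\neq\varnothing\bigr)\leq C(r)\,e^{-C\beta\ln\alpha^{-1}},
\]
where $\beta$ is taken large enough that the factor $K^{\ell}$ is dominated by $e^{-\frac{c_{1}}{2}\beta\ell}$, and $\alpha,\eta$ small enough relative to $r$ that the residual factor $e^{O(\beta(1+r))}$ is absorbed by $e^{-\frac{c_{1}}{2}\beta\ln\alpha^{-1}}$ at the cost of shrinking $C$. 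Finally, on the complementary event $\Lambda(\sigma_{V})\cap U_{r}(Z)=\varnothing$ no infinite contour of $\tau$ visits $U_{r}(Z)$, so the restriction of $\langle\cdot\rangle_{\beta,\sigma_{\Gamma_{\infty}}^{\pm}}$ to $U_{r}(Z)$ differs from that of the pure low-temperature $(+)$-phase $\langle\cdot\rangle_{+}$ only through finite local excitations, which gives the asserted $e^{-\beta\ln\alpha^{-1}}$-closeness.

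The main obstacle is the first paragraph together with the length estimate in the second: one must verify that \emph{all} of the infinitely many geodesics produced by the intricate inductive Construction stay at distance $\gtrsim\tfrac12\ln\alpha^{-1}$ from $Z$, and one must extend the quantitative length-minimality of a geodesical family --- used for finitely many curves in Lemma~\ref{family} and Theorem~\ref{finite} --- to certify the energy gain $\gtrsim\ln\alpha^{-1}$ of the Peierls flip in the present, infinite, setting. Once these geometric facts are in hand, the probabilistic part is verbatim the Peierls computation already carried out for Theorem~\ref{finite}.
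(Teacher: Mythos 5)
There is a genuine gap, and it sits exactly where you locate the ``only genuinely new ingredient''. Your reduction of the geometry to the single statement that \emph{every} geodesic of $\Gamma_{\infty}(\alpha,\eta)$ is at distance $\geq\frac12\ln\alpha^{-1}$ from $Z$ is not the right reduction: since $\Gamma_{\infty}$ contains \emph{infinitely many} geodesics, each carrying an interface which could independently fluctuate into $U_{r}(Z)$ with probability of order $e^{-\beta\,\mathrm{dist}(Z,\gamma_{i})}$, a uniform lower bound on the distances gives only ``$\infty\times e^{-\beta\frac12\ln\alpha^{-1}}$''. What is actually needed, and what the paper's proof spends most of its effort on, is the summability of $\sum_{i}e^{-\beta\,\mathrm{dist}(Z,\gamma_{i})}$ over the whole family: in the half--plane model the paper bounds $\varrho_{i}\gtrsim-\ln(r_{i}/X^{2})$ for the disjoint semicircles of radii $r_{i}\leq\frac12$, uses $\sum_{i}r_{i}\leq X$ and $r^{\prime\beta}+r^{\prime\prime\beta}<(r^{\prime}+r^{\prime\prime})^{\beta}$ to get $\sum_{i}e^{-\beta\varrho_{i}}\leq X^{-2\beta+1}$, and then organizes the whole argument as an induction on the number $k$ of ground-state interfaces involved in the contour reaching $U_{r}(Z)$, invoking for each single interface the Series--Sinai rigidity estimate $(\ref{39})$/$(\ref{41})$ (Proposition 4.1 of [SS]) via $(\ref{38})$. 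None of this appears in your proposal.

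The second, related, flaw is in your ``standard bookkeeping'': you sum over closed contours through a fixed site, $K^{\ell}$ of length $\ell$, against a per-contour weight $e^{-c_{1}\beta(\ln\alpha^{-1}-O(1+r))}$ and then assert that $K^{\ell}$ is ``dominated by $e^{-\frac{c_{1}}{2}\beta\ell}$''. But the energy gain of the Peierls flip inside a component $\lambda$ of $\Lambda(\sigma_{V})$ is the length \emph{surplus} of the true interface piece over the geodesic piece it replaces; by Estimate $1_{\mathbb{H}}$ this is of order the transverse excursion ($\approx\ln\alpha^{-1}$ here), \emph{not} of order $|\lambda|$: an interface can follow a geodesic for an arbitrarily long stretch while deviating only mildly, so $\lambda$ can be arbitrarily long with gain still $\approx\ln\alpha^{-1}$. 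Hence $\sum_{\ell}K^{\ell}e^{-\beta(\mathrm{gain})}$ diverges, and the naive contour count cannot close; this is precisely the entropy--energy balance for interface fluctuations that Theorem \ref{single} (the input from [SS], with its cone-shaped neighborhoods $\mathcal{C}_{m}(\gamma,z)$) resolves, and which the paper uses as a black box instead of recounting contours. Your closing remark that the probabilistic part is ``verbatim'' Theorem \ref{finite} implicitly concedes this, but as written your third paragraph replaces that input by an invalid Peierls count, and even granting it, the missing summation over the infinite family (first point) is the heart of the paper's proof of this theorem.
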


\begin{proof}
Let the event $\Lambda\left(  \sigma_{V}\right)  \cap U_{r}\left(  Z\right)
\neq\varnothing$ does happen, and $\theta\equiv\theta\left(  \sigma
_{V},Z\right)  \in\Lambda\left(  \sigma_{V}\right)  $ be a contour which
contributes to the event $\Lambda\left(  \sigma_{V}\right)  \cap U_{r}\left(
Z\right)  \neq\varnothing.$ Then the loop $\theta$ is made by fragments of
several interfaces from the family $\Sigma\left(  \sigma_{\Gamma_{\infty}%
}^{\pm}\right)  $ of contours of $\sigma_{\Gamma_{\infty}}^{\pm},$ and by the
same number of interfaces from $\Sigma\left(  \sigma_{\Gamma_{\infty}}^{\pm
}{\LARGE |}_{V^{c}}\cup\sigma_{V}\right)  .$ Denote this observable by
$k\left(  \theta\left(  \sigma_{V},Z\right)  \right)  ,$ and extend it to all
configurations $\sigma_{V}$ by defining $k\left(  \theta\left(  \sigma
_{V},Z\right)  \right)  =0$ if $\Lambda\left(  \sigma_{V}\right)  \cap
U_{r}\left(  Z\right)  =\varnothing.$ We will prove our theorem by induction
on $k$, estimating the probabilities of the events $\mathbb{P}_{V,\beta
,\sigma_{\Gamma_{\infty}}^{\pm}}\left(  k\left(  \theta\left(  \sigma
_{V},Z\right)  \right)  =k\right)  $ for each $k\geq1.$

\textbf{Case }$k=1.$ That means that for one of the geodesics $\gamma_{i}%
\in\Gamma_{\infty}$ the event $\Sigma_{\gamma_{i}V}\not \subset
\mathcal{C}_{m_{i}}\left(  \gamma_{i},z_{i}\right)  $ happens, where $z_{i}%
\in\gamma_{i}$ is chosen to be the point on $\gamma_{i}$ closest to $Z,$ and
$m_{i}=\mathrm{dist}\left(  Z,\gamma_{i}\right)  .$ We know already, that
\begin{equation}
\mathbb{P}_{V,\beta,\sigma_{\Gamma_{\infty}}^{\pm}}\left\{  \Sigma_{\gamma
_{i}V}\not \subset \mathcal{C}_{m_{i}}\left(  \gamma_{i},z_{i}\right)
\right\}  \leq\exp\left\{  -\beta m_{i}\right\}  . \label{38}%
\end{equation}
So we need to know how fast the sequence $m_{i}$ grows.

\begin{figure}[h]
\centering
\includegraphics[width=13cm]{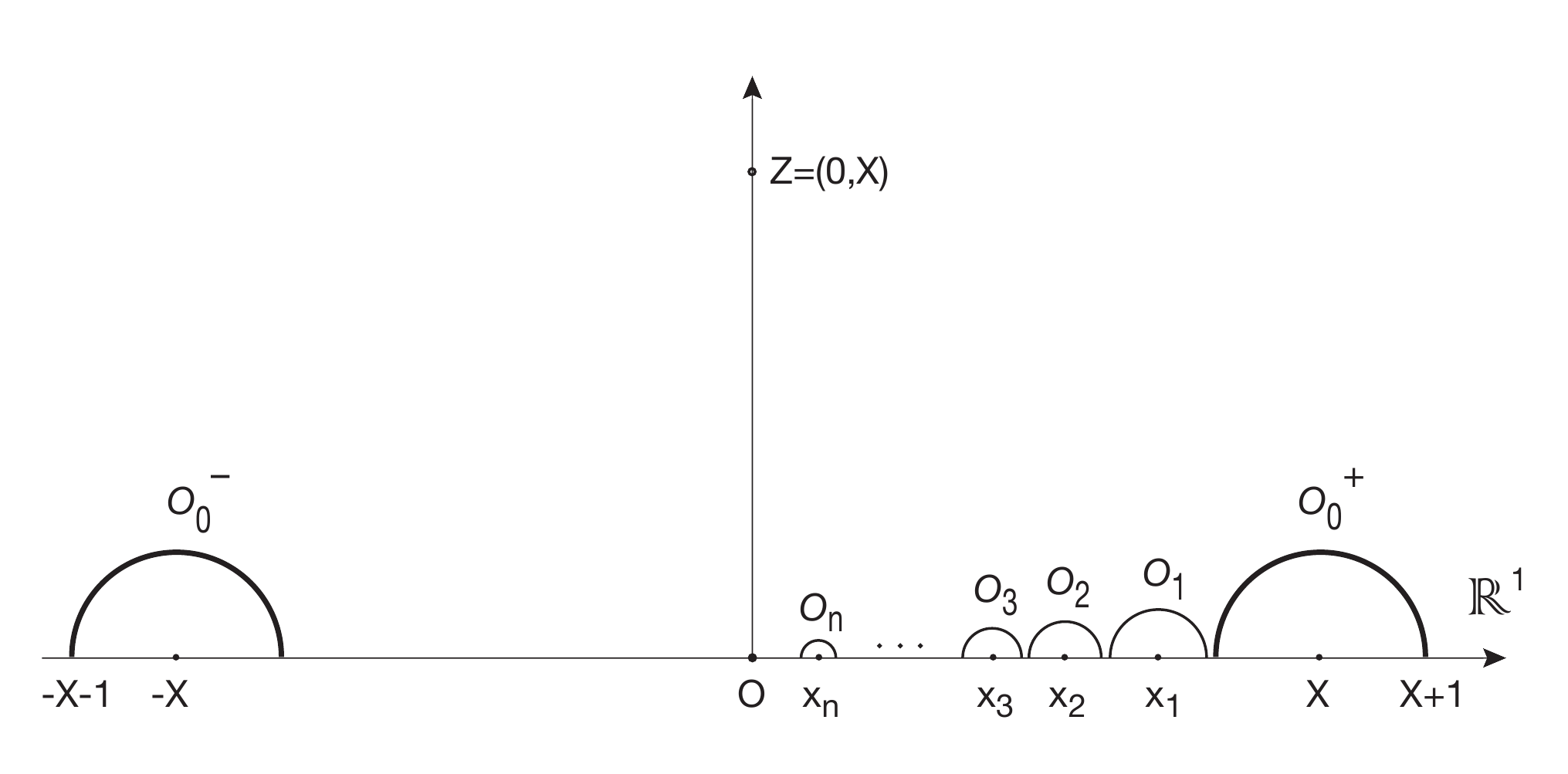}\caption{The geodesics
$\gamma_{i}$ in the half--plane model.}%
\label{Fol_States}%
\end{figure}

The computations are easier in the upper half--plane model. Without loss of
generality we can assume that the first two geodesics $\gamma_{1},\gamma_{2}$
are the two semicircles, $O_{0}^{-},O_{0}^{+}$ in the upper half--plane,
centered at points $\left(  -X,0\right)  $ and $\left(  X,0\right)  ,$ with
radius $1.$ $X$ is related to $\alpha$ by $X\sim\ln\alpha^{-1}.$ Let us take
for the point $Z$ the point $\left(  0,X\right)  .$ It is sufficient to
consider only those geodesics $\gamma_{i}$ -- i.e. semicircles $O_{i}$ --
which are located between the $y$-axis and the semicircle $O_{0}^{+}.$ This is
only `a quarter' of all geodesics from $\Gamma_{\infty},$ but it is sufficient
for our question. So let $X>x_{1}>x_{2}>...>x_{n}>0$, $r_{1},...,r_{n}>0$ be a
sequence of centers and a sequence of radii of non-intersecting semicircles
$O_{i}$, see Fig.\ref{Fol_States}.

It turns out that the divergence we need does not require the precise
information about the structure of the family $\left\{  O_{i}\right\}  ;$ in
particular, we will not use in the essential way the fact that the
cross-ratios for certain pairs $O_{i},O_{j}$ are our small numbers $\alpha$
and $\eta.$ The only thing needed is that the semicircles $O_{i}$ do not
intersect, and that all the radii $r_{i}\leq\frac{1}{2}$. We want to estimate
the distances $\rho_{i}=\mathrm{dist}\left(  Z,O_{i}\right)  ;$ we need them
to diverge to infinity fast enough. It is sufficient to estimate the distances
$\varrho_{i}=\mathrm{dist}\left(  O_{0}^{-},O_{i}\right)  ,$ since
$\mathrm{dist}\left(  o,O_{0}^{-}\right)  $ is just a constant. The latter is
simpler, since the distances $\varrho_{i}$ can be expressed via the
cross-ratios:%
\[
\varrho_{i}\cong-\ln\frac{\left(  2\right)  \left(  2r_{i}\right)  }{\left(
X+1+x_{i}+r_{i}\right)  \left(  X-1+x_{i}-r_{i}\right)  }\gtrsim-\ln
\frac{r_{i}}{X^{2}}.
\]
According to $(\ref{38})$, the probability in question is
bounded by
\[
\sum_{i=1}^{n}\exp\left\{  -\beta\varrho_{i}\right\}  \lesssim\sum_{i=1}%
^{n}\exp\left\{  \beta\ln\frac{r_{i}}{X^{2}}\right\}  =\sum_{i=1}^{n}\left(
\frac{r_{i}}{X^{2}}\right)  ^{\beta}=X^{-2\beta}\sum_{i=1}^{n}\left(
r_{i}\right)  ^{\beta}.
\]
Note that $2\left(  r_{1}+...+r_{n}\right)  \leq X,$ and that $r^{\prime\beta
}+r^{\prime\prime\beta}<\left(  r^{\prime}+r^{\prime\prime}\right)  ^{\beta}$
for $\beta>1.$ Therefore, for any $n$ we have
\[
\sum_{i=1}^{n}\left(  r_{i}\right)  ^{\beta}\leq\sum_{i=1}^{\left[  X\right]
}1=X,
\]
so
\[
\sum_{i=1}^{n}\exp\left\{  -\beta\varrho_{i}\right\}  \leq X^{-2\beta+1}.
\]

\textbf{Case }$k=2.$ The events $\mathbb{P}_{V,\beta,\sigma_{\Gamma_{\infty}%
}^{\pm}}\left(  k\left(  \theta\left(  \sigma_{V},Z\right)  \right)
=2\right)  $ means that there are two geodesics $\gamma^{\prime}\left(
x_{1},x_{2}\right)  ,\gamma^{\prime\prime}\left(  x_{3},x_{4}\right)
\in\Gamma_{\infty},$ such that the configuration $\sigma_{\Gamma_{\infty}%
}^{\pm}{\LARGE |}_{V^{c}}\cup\sigma_{V}$ has, among other, two interfaces
$\Sigma\left(  x_{1},x_{4}\right)  $ and $\Sigma\left(  x_{2},x_{3}\right)  ,$
so that the \textquotedblleft contour $\left[  \Sigma\left(  x_{1}%
,x_{4}\right)  \cup\Sigma\left(  x_{2},x_{3}\right)  \right]  \bigtriangleup
\left[  \gamma^{\prime}\left(  x_{1},x_{2}\right)  \cup\gamma^{\prime\prime
}\left(  x_{3},x_{4}\right)  \right]  $\textquotedblright\ surrounds the point
$Z.$ As is established during the proof of the Theorem \ref{finite}, the
probability of such an event is of the order of $\exp\left\{  -\beta\left(
\mathrm{dist}\left(  Z,\gamma^{\prime}\right)  +\mathrm{dist}\left(
Z,\gamma^{\prime\prime}\right)  \right)  \right\}  .$ In the previous
paragraph we have shown that
\[
\sum_{\gamma^{\prime},\gamma^{\prime\prime}\in\Gamma_{\infty}}\exp\left\{
-\beta\left(  \mathrm{dist}\left(  Z,\gamma^{\prime}\right)  +\mathrm{dist}%
\left(  Z,\gamma^{\prime\prime}\right)  \right)  \right\}  \leq\left[  \left(
\ln\alpha^{-1}\right)  ^{-2\beta+1}\right]  ^{2},
\]
and we are done.

The rest of the proof goes by induction on $k.$
\end{proof}

\section{Periodic Gibbs states on Cayley trees $\mathcal{T}_{n}$ for
$n=2,3$}

Motivated by the paper \cite{RR}, we have constructed in \cite{GRS} a huge
manifold of extremal Gibbs states on Cayley trees $\mathcal{T}_{n}.$ However,
our construction, as well as that in \cite{RR}, was restricted to the case
$n\geq4;$ we were using the fact that each vertex of our tree has at least $5$
neighbors. However, that, happily, does not mean that the trees $\mathcal{T}%
_{2}$ and $\mathcal{T}_{3}$ are that different from the rest. The difference
between $\mathcal{T}_{2},\mathcal{T}_{3}$ and $\mathcal{T}_{n\geq4}$ lies only
in the fact that the former trees do not carry states with period two. Thus,
the equations of \cite{RR} do not have solutions for $n=2,3,$ since they are
period two functions, while the equations for functions with bigger periods
are too complicated to be analyzed. On the other hand, our method to construct
`dimer' states and their analogs, needs just a tiny modification to be
applicable to all $n\geq2.$ This modification is explained below. To simplify
the exposition we will consider only the case of $\mathcal{T}_{2}.$ We start
with a definition.

\begin{definition}
A $k$-chain $C\subset E$ is a sequence $x_{0},x_{1},...,x_{k}\in V$ of
distinct n.n. vertices of $\mathcal{T}_{2}\equiv\left(  V,E\right)  .$ (For
example, a dimer is a $1$-chain.) The vertices $x_{0}$ and $x_{k}$ are called
the ends of the chain.
\end{definition}

The set of all $k$-chains will be denoted by $\mathcal{C}_{k}.$

A collection $R_{k}\subset\mathcal{C}_{k}$ of $k$-chains will be called a
\textit{covering} of $\mathcal{T}_{2}$, if every vertex $x$ of $\mathcal{T}%
_{2}$ belongs to precisely one $k$-chain from $R_{k}.$ The existence of
coverings is evident.

Let $R_{k}$ be such a covering, and suppose that $k$ is odd, $k=2m+1$. Then
every $k$-chain has a \textit{middle} dimer. The collection $D=\left\{
d_{i}=\left(  y_{i},y_{i}^{\prime}\right)  \in E\right\}  $ of these dimers is
called a $k$-covering, associated with $R_{k},$ $D=D\left(  R_{k}\right)  .$
(Of course, a $k$-covering is not a covering for $k>1.$)

Let a $k$-covering $D$ be fixed. Consider the Ising spin configuration
$\sigma_{D}$ on $\mathcal{T}_{2},$ defined by the property: for every two n.n.
sites $z,z^{\prime}\in V$%
\[
\sigma_{D}\left(  z\right)  \sigma_{D}\left(  z^{\prime}\right)  =\left\{
\begin{array}
[c]{cc}%
-1 & \text{ for }\left(  z,z^{\prime}\right)  \in D\text{ }\\
+1 & \text{ for }\left(  z,z^{\prime}\right)  \notin D
\end{array}
\right.  .
\]
In fact, there are exactly two such configurations, which differ by a global
spin-flip. We choose one, see Fig.\ref{3-covering}. Our main observation is that

\begin{claim}
For $k\geq 5$  the configuration $\sigma _{D}$ is a ground state configuration.
 Moreover, it is a stable ground state, which means that there
exist a family of low temperature Ising model Gibbs states $\left\langle
\cdot\right\rangle _{\beta}$ on $\mathcal{T}_{2},$ which converges to
$\sigma_{D}$ weakly, as $\beta\rightarrow\infty.$ In fact, for trees
$\mathcal{T}_{n\geq4}$ this is true even for $k=1,$ see \cite{GRS}, but for
$\mathcal{T}_{2}$ this is not the case.
\end{claim}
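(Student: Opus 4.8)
The plan is to show that $\sigma_D$ is a ground state configuration via a Peierls–type excitation estimate, and then to run the standard low–temperature contour expansion to produce the Gibbs states $\langle\cdot\rangle_{\beta}$. The first step is the energy bookkeeping. If $\sigma'$ agrees with $\sigma_D$ off a finite set $A\subset V$ and is obtained from $\sigma_D$ by flipping the spins in $A$, then, flipping being additive over the connected components of $A$, we may take $A$ connected, and
\[
H(\sigma')-H(\sigma_D)=2\bigl(|\partial A^{+}|-|\partial A^{-}|\bigr),
\]
where $\partial A$ is the edge boundary of $A$, $\partial A^{-}=\partial A\cap D$ is the set of boundary edges that are broken in $\sigma_D$, and $\partial A^{+}=\partial A\setminus D$. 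Since $\mathcal{T}_2$ is $3$-regular and $A$ is a finite subtree, $|\partial A|=|A|+2$, so everything reduces to the combinatorial inequality $|\partial A\cap D|\le(\tfrac12-\varepsilon)|\partial A|$, with a fixed $\varepsilon=\varepsilon(k)>0$, for all finite connected $A$.

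For this I would use the structure of $D$. The $k$-chains of $R_k$ partition $V$, and the dimers of $D$ are the middle edges of these chains, hence \emph{within}-chain edges; consequently every boundary edge of $A$ joining two distinct chains is automatically unbroken. Contracting each chain to a point turns $\mathcal{T}_2$ into the $(k+3)$-regular tree $\mathcal{G}$, and since $A$ is connected its intersection with any chain is a single subpath. A broken boundary edge $(u,v)\in\partial A\cap D$, $u\in A$, thus lies in a chain $C$ which $A$ cuts through the middle; because $k=2m+1\ge 5$, the two arms of $C$ issuing from that dimer have length $m\ge 2$. The outer arm (the one through $v$) lies entirely outside $A$; the inner arm (through $u$) either lies entirely inside $A$, contributing at least $m+1\ge3$ vertices to $A$ that are pairwise disjoint over different broken boundary edges, or it exits $A$, producing a distinct unbroken boundary edge. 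Combining this with the fact that the chain-midpoints form an induced matching in $\mathcal{T}_2$ (so $A$ cannot pack broken boundary edges too densely), and choosing the covering $R_k$ to be periodic and sufficiently ``spread out'' — so that, for instance, no chain has all $k+3$ of its neighbours simultaneously cut — yields the required bound $|\partial A\cap D|\le(\tfrac12-\varepsilon)|\partial A|$; this is the analogue of the bound used in \cite{GRS} for $\mathcal{T}_{n\ge4}$, with the long arms of the chains playing the role of the surplus neighbours.

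Given the Peierls bound, stability follows by the standard argument on a graph of bounded degree. A spin configuration in a finite box $V$ with boundary condition $\sigma_D$ differs from $\sigma_D$ along a finite family of closed contours — the connected components of the symmetric difference of its broken-bond set with that of $\sigma_D$ — and the excitation energy of a contour of length $n$ is at least $cn$ by the estimate above, while the number of contours of length $n$ through a fixed edge is at most $C^{n}$. Hence for $\beta$ large each contour occurs with probability at most $e^{-c\beta n}$, these bounds are summable, and the usual Peierls/cluster-expansion reasoning shows that $\mathbb{P}_{V,\beta,\sigma_D}$ converges as $V\uparrow\mathcal{T}_2$ to a Gibbs state $\langle\cdot\rangle_{\beta}$ which converges weakly to $\sigma_D$ as $\beta\to\infty$.

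The main obstacle is exactly the combinatorial inequality with a constant $\varepsilon$ uniform in $A$, and in particular the need to choose the covering $R_k$ carefully: for a badly chosen covering one can build finite $A$ with $|\partial A\cap D|$ essentially equal to $\tfrac12|\partial A|$ (zero-energy excitations), so controlling the borderline combinatorics and ruling these out is where the real work lies. The threshold $k\ge5$ is genuine: the arms have length $m\ge2$ only then. For $k=1$ the configuration $\sigma_D$ fails to be a ground state at all — taking $A=\{r,a_2,a_3\}$, where $(r,a_1)\in D$ and $a_2,a_3$ are the remaining two neighbours of $r$, one finds $|\partial A^{-}|=3$, $|\partial A^{+}|=2$, so $H(\sigma')-H(\sigma_D)=-2$ — which is the contrast with $\mathcal{T}_{n\ge4}$ referred to in the statement.
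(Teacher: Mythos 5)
Your overall architecture matches the paper's: reduce stability to a uniform density bound for the dimer set $D$ in edge boundaries of finite connected sets (equivalently, in Peierls loops), then run the standard low-temperature contour argument; your energy bookkeeping $H(\sigma')-H(\sigma_D)=2\bigl(|\partial A\setminus D|-|\partial A\cap D|\bigr)$ and your $k=1$ counterexample on $\mathcal{T}_2$ are both correct. But the heart of the claim is precisely the combinatorial inequality, and you do not prove it --- you explicitly defer ``the real work''. Worse, the charging scheme you sketch is quantitatively insufficient: if each broken boundary edge is charged either to $m+1$ vertices of $A$ (disjointly) or to a single distinct unbroken boundary edge, then with $|A|=|\partial A|-2$ you only get
\[
|\partial A\cap D|\;\leq\;\frac{|\partial A|-2}{m+1}+\bigl(|\partial A|-|\partial A\cap D|\bigr),
\qquad\text{i.e.}\qquad
|\partial A\cap D|\;\lesssim\;\frac{m+2}{2(m+1)}\,|\partial A|,
\]
which is \emph{above} $\tfrac12|\partial A|$ for every $m$, so it does not even yield the ground-state property, let alone a Peierls margin. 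The appeal to an ``induced matching'' of midpoints and to a ``periodic, sufficiently spread out'' covering is not an argument: the covering is left unspecified, and the paper's own warning that the bound fails for paths (one can have $|\gamma|_D=|\gamma|$ along a path) shows that any local or greedy charging is delicate and that the loop/boundary structure must be used globally.

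What the paper actually does at this point is construct an explicit covering (each chain, started at the not-yet-covered vertex closest to the root, always continues along the ``left'' bond away from the root) and proves $\sup_\gamma |\gamma|_D/|\gamma|\le \frac{1}{m+1}$ by induction on the number of vertices enclosed by the loop $\gamma$. Taking the enclosed vertex $x$ deepest from the root: if the bond $l_x\notin D$ one cuts $x$ off, which only increases the ratio; if $l_x\in D$, then by the leftmost construction none of the bonds $r_{x_0},\dots,r_{x_m}$ hanging off the lower half of the chain through $x$ lie in any chain, the loop must cross each of the $m+1$ disjoint subtrees attached there, and these crossings avoid $D$; removing the whole subtree below the middle dimer decreases $|\gamma|$ by at least $m+1$ and $|\gamma|_D$ by exactly $1$, so each $D$-crossing is traded against $m+1$ non-$D$ crossings. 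This is how the factor $\frac{1}{m+1}<\frac12$ (hence $m\ge2$, i.e.\ $k\ge5$) is obtained, and it is exactly the step missing from your proposal; once it is in place, your concluding Peierls/contour paragraph is the same as the paper's.
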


\begin{figure}[h]
\centering
\includegraphics[width=11cm]{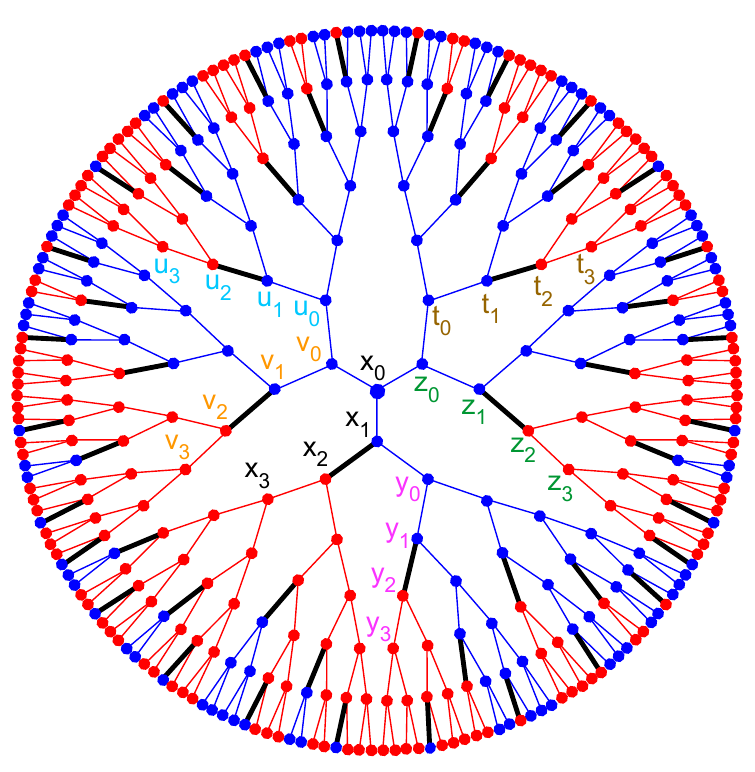}\caption{Covering $R_{3}$ of
$\mathcal{T}_{2}$, 3--covering $D(R_3)$ (black bonds)
and  spin configuration $\sigma_{D}$. Blue and red sites have opposite
signs. The first six chains are marked.}%
\label{3-covering}%
\end{figure}

To see this, we put $\mathcal{T}_{2}$ on $\mathbb{R}^{2},$ since all Cayley
trees are planar, and we will talk about contours, which are closed loops on
$\mathbb{R}^{2},$ which intersect the bonds of our tree, but which do not pass
through the vertices.

For every loop $\gamma,$ which is a Peierls-like contour, we define the length
$\left\vert \gamma\right\vert $ of $\gamma$ to be the number of bonds of
$\mathcal{T}_{2}$ that $\gamma$ traverses, and let $\left\vert \gamma
\right\vert _{D}\leq\left\vert \gamma\right\vert $ be the number of bonds
among them which are from $D.$

Consider the ratio $\frac{\left\vert \gamma\right\vert _{D}}{\left\vert
\gamma\right\vert },$ and let%
\[
\varphi\left(  D\right)  =\sup_{\gamma}\frac{\left\vert \gamma\right\vert
_{D}}{\left\vert \gamma\right\vert },
\]
where the $\sup$ is taken over all finite contours. Our claim follows
immediately from the following Peierls stability property:

\begin{lemma}%
\begin{equation}
\varphi\left(  D\right)  \leq\frac{1}{m+1}.\label{46}%
\end{equation}

\end{lemma}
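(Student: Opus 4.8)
The plan is to reduce the supremum to its simplest instances, extract an exact Euler-type identity on the tree, and then run a counting/discharging argument keyed to the chains of $R_{k}$. First, a contour $\gamma$ together with the vertex set $F$ it encloses splits into connected sub-contours, each bounding a connected finite set of vertices, and $|\gamma|_{D}/|\gamma|$ is a weighted average of the corresponding ratios; hence it suffices to bound $|\gamma|_{D}/|\gamma|$ for a single connected contour $\gamma=\partial F$ with $F$ finite and connected. Since $\mathcal{T}_{2}$ is a tree, $F$ induces a subtree with exactly $|F|-1$ internal bonds, so trivalence gives the identity
\[
|\gamma|=|\partial F|=3|F|-2(|F|-1)=|F|+2 .
\]
Moreover, because $\mathcal{T}_{2}$ has no cycles and $F$ is connected, $F$ is geodesically convex; in particular, for every chain $C\in R_{k}$ the intersection $F\cap C$ is a contiguous sub-path of $C$.

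Next I would localise each $D$-bond of $\gamma$ inside its chain. If $b\in\gamma\cap D$, then $b=(x_{m},x_{m+1})$ is the middle dimer of a unique chain $C(b)=(x_{0},\dots,x_{2m+1})$, and since $b\in\partial F$ exactly one of its endpoints lies in $F$, say $x_{m}\in F$ and $x_{m+1}\notin F$ (the other case being symmetric). Convexity then forces $F\cap C(b)=\{x_{a},\dots,x_{m}\}$ for some $0\le a\le m$: the $F$-side of a cut dimer is a sub-path of the half-chain $C^{-}(b)=\{x_{0},\dots,x_{m}\}$ ending at its midpoint. I assign to $b$ the half-chain $C^{-}(b)$, with its $m+1$ vertices and $m$ internal bonds. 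Distinct $D$-bonds of $\gamma$ belong to distinct chains, and the chains of $R_{k}$ are vertex-disjoint, so the assigned half-chains are pairwise disjoint.

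The crux — and the step I expect to be the real obstacle — is to show that each assigned half-chain is charged $m+1$ units of $|\gamma|=|F|+2$, which yields $(m+1)|\gamma|_{D}\le|\gamma|$ and hence $(\ref{46})$. When $F\cap C(b)=C^{-}(b)$ (the case $a=0$) this is immediate: the $m+1$ vertices of $C^{-}(b)$ lie in $F$ and are charged against $|F|$. When $a>0$ the block is strictly shorter, the vertices $x_{0},\dots,x_{a-1}$ lie outside $F$, and one must exploit that $x_{a}$, the $F$-endpoint of the block, is joined to the remainder of $F$ only through its single off-chain neighbour (this is where trivalence of $\mathcal{T}_{2}$ enters): the missing stub of $C^{-}(b)$ then hangs off $F$ along a bond of $\gamma$, which is charged instead. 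Making this uniform requires a discharging scheme that (i) distributes the $|F|+2$ boundary bonds and the interior vertices of $F$ so that every $D$-bond of $\gamma$ accumulates weight $\ge m+1$; (ii) keeps the charges of different $D$-bonds disjoint — here vertex-disjointness of the chains and the absence of short cycles in $\mathcal{T}_{2}$ are essential; and, most delicately, (iii) treats the degenerate blocks $F\cap C=\{x_{m}\}$, where the cut dimer must borrow from neighbouring chains and the bound is tight. Conceptually this is the tree analogue of the ``surgery increases length'' mechanism of a ground-state configuration, and of Lemma \ref{family}: burying the defect bond in the middle of a chain of length $2m+1$ forces any competing contour to pay $\gtrsim m$ extra bonds.

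Finally, once $(\ref{46})$ is in hand, for $k\ge 5$ (so $m\ge 2$) the energy cost $|\gamma|-2|\gamma|_{D}\ge|\gamma|\,\frac{m-1}{m+1}>0$ of flipping $\sigma_{D}$ inside any finite $\gamma$ is strictly positive, whence $\sigma_{D}$ is a ground state; the standard low-temperature Peierls expansion, fed with the uniform ratio bound $\varphi(D)\le\frac{1}{m+1}<\frac12$, then produces the family $\left\langle\cdot\right\rangle_{\beta}$ converging weakly to $\sigma_{D}$ as $\beta\to\infty$, which is the Claim. The borderline value $\frac{1}{m+1}=\frac12$ at $k=3$ explains why the strict statement, and hence the stable ground state, is only asserted for $k\ge 5$.
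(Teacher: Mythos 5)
Your reductions (connected $F$, the identity $\left\vert \gamma\right\vert =\left\vert F\right\vert +2$ on the trivalent tree, convexity of $F\cap C$) are fine, but the proof stops exactly where the lemma lives: the discharging scheme (i)--(iii) is announced, not constructed, and in the generality in which you set it up it cannot be constructed. Take $F=\{x_{m}\}$, a single vertex which is an endpoint of a middle dimer: the contour crosses three bonds, exactly one of which lies in $D$, so $\left\vert \gamma\right\vert _{D}/\left\vert \gamma\right\vert =1/3$. This equals $\frac{1}{m+1}$ for $m=2$ and exceeds it for every $m\geq3$; with only three boundary bonds available there is nothing to ``borrow from neighbouring chains'', so no position-independent assignment of $m+1$ units per cut dimer can exist. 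In other words, the statement you are attempting -- the bound for \emph{all} finite contours and an arbitrary $k$-covering -- is false as such for $m\geq 3$; a ratio bound like $1/3$ would still beat the Peierls threshold $1/2$, but it is not $(\ref{46})$.

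The paper's proof uses precisely the two pieces of structure your scheme discards. First, $D$ is not arbitrary: the covering $R$ is built from a fixed root $\mathbf{0}$ by the left-most rule, which guarantees that none of the right bonds $r_{x_{0}},\dots,r_{x_{m}}$ along the lower half of a chain belongs to any chain of $R$. Second, the inequality is proved only for loops surrounding $\mathbf{0}$, by induction on $\left\vert \mathrm{Int}\left(  \gamma\right)  \cap V\right\vert $, via two surgeries: if the deepest enclosed vertex $x$ has $l_{x}\notin D$, one erases $x$ from the interior (length drops by $1$, $\left\vert \gamma\right\vert _{D}$ unchanged); if $l_{x}\in D$, then because the loop encloses $\mathbf{0}$ it is forced to cross each of the $m+1$ disjoint infinite subtrees $T_{0},\dots,T_{m}$ hanging off the lower half-chain, and these crossings are $D$-free, so excising the subtree below $x$ lowers $\left\vert \gamma\right\vert $ by at least $m+1$ while lowering $\left\vert \gamma\right\vert _{D}$ by exactly $1$, and the induction hypothesis finishes. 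The factor $m+1$ thus comes from $m+1$ forced crossings of off-chain subtrees by a contour enclosing the root, not from vertices of the half-chain lying inside $F$; to salvage your charging argument you would have to import both the root-anchored construction of $D$ and the restriction on which contours are considered, since these are exactly what rule out the small-contour counterexample above.
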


Note that for paths $\gamma$ which are not loops this is not true. Moreover,
one can find paths $\gamma$ -- even long paths -- for which $\left\vert
\gamma\right\vert _{D}=\left\vert \gamma\right\vert .$

We will have Peierls condition satisfied, if for some (small) $c>0$ and any
$\gamma$ we have $\left\vert \gamma\right\vert -2\left\vert \gamma\right\vert
_{D}>c\left\vert \gamma\right\vert ,$ which means that we need $\left\vert
\gamma\right\vert _{D}<\frac{1-c}{2}\left\vert \gamma\right\vert .$ According
to $\left(\ref{46}\right)$, the desired relation holds once $\frac{1}%
{m+1}<\frac{1}{2},$ i.e. when $m\geq2$ and $k\geq5.$

\begin{proof}
Let $\mathcal{T}_{2}$ be embedded into upper half--plane $\mathbb{R}^{2+}.$
Choose an arbitrary vertex $\mathbf{0}\in\mathcal{T}_{2}$, which will be
called the root of the tree. Let it be the only vertex with $y$-coordinate
equals to zero. We suppose that all the bonds are of the length one. First we
will construct a concrete covering $R$ of $\mathcal{T}_{2}$ by $k$-chains. The
construction is by induction. The first chain starts from $\mathbf{0}%
\in\mathcal{T}_{2},$ and each of its bonds is the left one among the two (or
three in the first step) possible options. Suppose the $k$-chains $C_{i}$ are
already defined, $i\leq n-1.$ Let the vertex $x\in\mathcal{T}_{2}$ be the
closest to $\mathbf{0\ }$among those not yet covered by all the $C_{i}.$ If
there are several such vertices, we take the leftmost one. The chain $C_{n}$
starts at $x$ and then always uses the `left' bonds.

Let $\gamma\subset\mathbb{R}^{2}$ be a loop, surrounding $\mathbf{0,}$ and
such that $\gamma\cap V=\varnothing.$ We will prove the inequality
\[
\frac{\left\vert \gamma\right\vert _{D}}{\left\vert \gamma\right\vert }%
\leq\frac{1}{m+1}%
\]
by induction in the number $\left\vert \mathrm{Int}\left(  \gamma\right)  \cap
V\right\vert $ of the vertices of $\mathcal{T}_{2},$ surrounded by $\gamma.$
Note that if $\left\vert \mathrm{Int}\left(  \gamma\right)  \cap V\right\vert
\leq m,$ then $\left\vert \gamma\right\vert _{D}=0,$ so the initial step of
induction is done.

For any $y\neq\mathbf{0}\in V$ denote by $l_{y}$ and $r_{y}$ the left and the
right bonds, starting at $y$ and going away from $\mathbf{0},$ and by $b_{y}$
the bond going `back' to $\mathbf{0}$.

Suppose the lemma is true for all $\gamma$ with $\left\vert \mathrm{Int}%
\left(  \gamma\right)  \cap V\right\vert \leq N-1.$ Take some loop $\gamma$
with $\left\vert \mathrm{Int}\left(  \gamma\right)  \cap V\right\vert =N.$ Let
$x\in\mathrm{Int}\left(  \gamma\right)  $ be the point in $\mathrm{Int}\left(
\gamma\right)  $ with maximal distance from $\mathbf{0.}$ Because of
maximality of $x$, $\gamma$ intersects both $l_{x}$ and $r_{x}.$

Suppose first that $l_{x}\notin D.$ (By construction, $r_{x}\notin D$). Let us
deform $\gamma$ in the vicinity of $x$ into $\gamma^{\prime}$, moving $\gamma$
in such a way that instead of intersecting $l_{x}$ and $r_{x}$ it intersects
just $b_{x}.$ The situation for all other bonds stays the same. In other
words, $\mathrm{Int}\left(  \gamma^{\prime}\right)  =\mathrm{Int}\left(
\gamma\right)  \smallsetminus x.$ Then $\left\vert \gamma^{\prime}\right\vert
=\left\vert \gamma\right\vert -1,$ while $\left\vert \gamma^{\prime
}\right\vert _{D}=\left\vert \gamma\right\vert _{D},$ so $\frac{\left\vert
\gamma\right\vert _{D}}{\left\vert \gamma\right\vert }<$ $\frac{\left\vert
\gamma^{\prime}\right\vert _{D}}{\left\vert \gamma^{\prime}\right\vert }.$ By
induction, $\frac{\left\vert \gamma^{\prime}\right\vert _{D}}{\left\vert
\gamma^{\prime}\right\vert }\leq\frac{1}{m+1},$ and we are done.

Now consider the case when $l_{x}\in D.$ Let $x_{0},x_{1},...,x_{m}=x$ be the
`lower' half of the $\left(  2m+1\right)  $-chain $C^{x}$, to which $x$
belongs. Note, that, by construction, none of the bonds $r_{x_{i}},$
$i=0,1,...,m$ belong to the $\left(  2m+1\right)  $-chains from our covering
$R.$ Let $T_{0},T_{1},...,T_{m}\subset\mathcal{T}_{2}$ be the subtrees,
defined as follows: $T_{i}$ is the maximal connected component of the
complement $\mathcal{T}_{2}\smallsetminus C^{x},$ containing the bond
$r_{x_{i}}.$ Since $\mathbf{0}\in\mathrm{Int}\left(  \gamma\right)  ,$ the
loop $\gamma$ has to intersect every tree $T_{i}$ at least once. However, due
to the maximality property of $x$, the intersection $\left[  \gamma\cap\left(
T_{0}\cup T_{1}\cup...\cup T_{m}\right)  \right]  $ is disjoint with $D;$ in
other words, $\left[  \gamma\cap\left(  T_{0}\cup T_{1}\cup...\cup
T_{m}\right)  \right]  \cap D=\varnothing.$ On the other hand, this
intersection $\left[  \gamma\cap\left(  T_{0}\cup T_{1}\cup...\cup
T_{m}\right)  \right]  $ involves at least $m+1$ bonds -- at least one bond
per any subtree $T_{i}.$ Let $T_{x_{m}}\subset\mathcal{T}_{2}$ be the subtree,
rooted at $x_{m}$ and consisting of all its descendants. Let us deform
$\gamma$ into $\gamma^{\prime\prime},$ where the latter loop is defined by two conditions:

1. \noindent$\mathrm{Int}\left(  \gamma^{\prime\prime}\right)  \cap T_{x_{m}%
}=\varnothing;$

2. \noindent$\mathrm{Int}\left(  \gamma^{\prime\prime}\right)  \cap\left[
\mathcal{T}_{2}\smallsetminus T_{x_{m}}\right]  =\mathrm{Int}\left(
\gamma\right)  \cap\left[  \mathcal{T}_{2}\smallsetminus T_{x_{m}}\right]  .$

One sees immediately that the bonds that contribute to $\left\vert
\gamma^{\prime\prime}\right\vert $ are these bonds in $\mathcal{T}%
_{2}\smallsetminus T_{x_{m}},$ which contribute to $\left\vert \gamma
\right\vert ,$ plus the bond $r_{x_{m}}.$ So $\left\vert \gamma^{\prime\prime
}\right\vert \leq\left\vert \gamma\right\vert -\left(  m+1\right)  .$ By
construction, $\left\vert \gamma^{\prime\prime}\right\vert _{D}=\left\vert
\gamma\right\vert _{D}-1.$ By induction, we have $\frac{\left\vert
\gamma\right\vert _{D}-1}{\left\vert \gamma\right\vert -\left(  m+1\right)
}\leq\frac{1}{m+1},$ which implies that $\frac{\left\vert \gamma\right\vert
_{D}}{\left\vert \gamma\right\vert }\leq\frac{1}{m+1}.$
\end{proof}

\subsection{Periodic states with non-zero magnetization}

The previous construction results in a ground state with zero mean
magnetization. We present now a generalization, which will have a non-zero
magnetization. It is analogous to our construction of such states on the
Lobachevsky plane.

Let $R$ be the covering by $k$-chains, constructed in the proof above, $D$ is
the associated $k$-covering by the dimers, and $\sigma_{D}$ is the
corresponding ground state. For every $k$-chain $C=\left\{  x_{0}%
,x_{1},...,x_{k}\right\}  \in R$ we call the site $x_{0}$ a $\left(  +\right)
$-end iff $\sigma_{D}\left(  x_{0}\right)  =+1.$ We denote it by $e_{+}\left(
C\right)  .$ Then the opposite end-point $x_{k}$ is, naturally, called a
$\left(  -\right)  $-end, and denoted by $e_{-}\left(  C\right)  .$ Let
$k=l+n-1,$ $l>n>0$. Define the collection of dimers $D_{l:n}$ as follows: on
every chain $C\in R$ let us take the bond $d\left(  C\right)  ,$ which is at
distance $l$ from $e_{+}\left(  C\right)  $ (and so at distance $n$ from
$e_{-}\left(  C\right)  $).

The collection $D_{l:n}$ satisfies%
\[
\varphi\left(  D_{l:n}\right)  \leq\frac{1}{n+1},
\]
as the Lemma above shows, so the configuration $\sigma_{D_{l:n}}$ is a stable
ground state, once $n$ is big enough. The nice property of it is that, unlike
the collection $D\equiv D_{\left(  m+1\right)  :\left(  m+1\right)  }$,
constructed earlier, its mean magnetization $m\left(  \sigma_{D_{l:n}}\right)
$ is non-zero.

\bigskip

\textit{Acknowledgement. We would like to thank M. Aizenman, P. Bleher, A. van Enter and S. Pirogov for
valuable discussions and comments.}

\bigskip

\vspace{2truecm}

\noindent \textbf{Daniel Gandolfo}\\
\noindent Aix Marseille Universit\'{e}, CNRS, CPT, UMR 7332, 13288 Marseille, France.
Universit\'{e} de Toulon, CNRS, CPT, UMR 7332, 83957 La Garde, France.\\
\noindent gandolfo@cpt.univ-mrs.fr, gandolfo@univ-tln.fr

\bigskip

\noindent \textbf{Jean Ruiz}\\
\noindent Aix Marseille Universit\'{e}, CNRS, CPT, UMR 7332, 13288 Marseille, France.
Universit\'{e} de Toulon, CNRS, CPT, UMR 7332, 83957 La Garde, France.\\
\noindent ruiz@cpt.univ-mrs.fr

\bigskip

\noindent  \textbf{Senya Shlosman} \\
\noindent Aix Marseille Universit\'{e}, CNRS, CPT, UMR 7332, 13288 Marseille, France.
Universit\'{e} de Toulon, CNRS, CPT, UMR 7332, 83957 La Garde, France.\\
\noindent shlosman@cpt.univ-mrs.fr, shlosman@univ-amu.fr\\
\noindent Inst. of the Information Transmission Problems, RAS, Moscow, Russia.\\
shlos@iitp.ru

\end{document}